\begin{document}
\bibliographystyle{abbrv}

\title{A  de Bruijn identity for symmetric stable laws}
\author{Oliver Johnson\thanks{Statistics Group, Department of Mathematics,
University of Bristol, University Walk, Bristol, BS8 1TW, UK.
Email {\tt o.johnson@bris.ac.uk}}
}
\date{\today}
\maketitle

\newtheorem{theorem}{Theorem}[section]
\newtheorem{lemma}[theorem]{Lemma}
\newtheorem{proposition}[theorem]{Proposition}
\newtheorem{corollary}[theorem]{Corollary}
\newtheorem{conjecture}[theorem]{Conjecture}
\newtheorem{definition}[theorem]{Definition}
\newtheorem{example}[theorem]{Example}
\newtheorem{condition}{Condition}
\newtheorem{main}{Theorem}
\newtheorem{remark}[theorem]{Remark}
\hfuzz30pt

\def \outlineby #1#2#3{\vbox{\hrule\hbox{\vrule\kern #1%
\vbox{\kern #2 #3\kern #2}\kern #1\vrule}\hrule}}%
\def \endbox {\outlineby{4pt}{4pt}{}}%
\newenvironment{proof}
{\noindent{\bf Proof\ }}{{\hfill \endbox
}\par\vskip2\parsep}
\newenvironment{pfof}[2]{\removelastskip\vspace{6pt}\noindent
 {\it Proof  #1.}~\rm#2}{\par\vspace{6pt}}
\newenvironment{skproof}
{\noindent{\bf Sketch Proof\ }}{{\hfill \endbox
}\par\vskip2\parsep}

\newcommand{\var}{{\rm{Var\;}}}
\newcommand{\cov}{{\rm{Cov\;}}}
\newcommand{\tends}{\rightarrow \infty}
\newcommand{\tz}{\rightarrow 0}
\newcommand{\C}[1]{{\bf ULC}(#1)}
\newcommand{\ep}{{\mathbb {E}}}
\newcommand{\pr}{{\mathbb {P}}}
\newcommand{\co}{{\mathbb {C}}}
\newcommand{\re}{{\mathbb {R}}}
\newcommand{\zz}{{\mathcal{Z}}}
\newcommand{\lle}{{\mathcal{L}}}
\newcommand{\snr}{{\textsf{snr}}}
\newcommand{\mmse}{{\textsf{mmse}}}
\newcommand{\roo}[2]{\rho^{M}_{#1,#2}}
\newcommand{\roof}[1]{\rho_{#1}^{F}}
\newcommand{\I}{\mathbb {I}}
\newcommand{\half}{\frac{1}{2}}
\newcommand{\bin}[2]{\binom{#1}{#2}}
\newcommand{\blah}[1]{}
\newcommand{\wh}[1]{\widehat{#1}}
\newcommand{\whopt}[1]{\widehat{#1}_{\rm opt}}

\begin{abstract} \noindent
We show how some attractive information--theoretic properties of Gaussians pass over to more 
general families of  stable densities. We define a new score function 
for symmetric stable laws, and use it to give a
stable version of the heat equation. Using this, we  derive a version of the de Bruijn identity, allowing us to write
the derivative of relative entropy as an inner product of score functions. We discuss maximum entropy properties of 
symmetric stable densities.
\end{abstract}

\section{Introduction and notation}
\subsection{Convergence of information--theoretic quantities}
A substantial body of literature (see for example \cite{artstein2, artstein, barron,bobkov6,brown,johnson14, linnik,
linnik2}) reformulates 
the classical Central Limit Theorem (CLT) in terms of 
  information--theoretic
quantities, such as Fisher information, entropy and relative entropy. We write $Z_{\sigma^2}$ for a Gaussian random 
variable with mean $0$, variance $\sigma^2$ and density $\phi_{\sigma^2}$. Given a probability density $f$ of variance $\sigma^2$, we write $H(f)$ and $D(f \| \phi_{\sigma^2})$ for the
entropy and relative entropy respectively.
 Recall the following  definition:
\begin{definition} \label{def:quantfish}
Fix a probability density $f$ of variance $\sigma^2$.  We write $J(f)$ for the standardized Fisher information
\begin{eqnarray} 
\label{eq:stanfisherdef} J(f) & = & \sigma^2 \int _{-\infty}^{\infty} f(x) \left( \roof{f}(x) +  \frac{x}{\sigma^2} \right)^2 dx.
\end{eqnarray}
 Here Fisher score function $\roof{f}(x) := \frac{f'(x)}{f(x)} =
 \frac{\partial}{\partial x} \log \left( f(x) \right)$, and we refer to 
\begin{equation} \label{eq:standscore} \roof{f}(x) +  \frac{x}{\sigma^2}   = \roof{f}(x) - \roof{\phi_{\sigma^2}}(x) = 
 \frac{\partial}{\partial x} \log \left( \frac{f(x)}{\phi_{\sigma^2}(x)}  \right) \end{equation}
 as the standardized Fisher score function, which vanishes if $f$ is Gaussian, and hence confirm $J(\phi_{\sigma^2}) = 0$.
Note that strictly speaking, $J(f)$ should be referred to as Fisher information with respect to location parameter, though we omit this
for brevity.
\end{definition}
Study of the Central Limit Theorem in this spirit
 began with  Linnik \cite{linnik,linnik2}, who used arguments
based on truncating and bounding densities (though note that \cite{barron} points out that Linnik's results must be regarded
as dubious, in that they contradict other known facts).  Interest in the information-theoretic approach to the 
CLT was revived and
developed in the 1980s by 
Brown \cite{brown} and Barron \cite{barron}. Writing $f_n$ for the density of an appropriately normalized sum of independent
identically distributed (IID)
random variables,
Brown \cite{brown} gave conditions for
 the standardized Fisher information $J(f_n)$ to converge to zero.
 Barron \cite{barron} proved necessary and sufficient conditions for $D(f_n \| \phi_{\sigma^2})$ to converge to zero,
or equivalently for the entropy $H(f_n)$ to converge to $H(\phi_{\sigma^2})$.
\subsection{de Bruijn identity}
 In fact,
Barron's work builds on Brown's, using the de Bruijn identity. This result
was first stated in differential form by Stam \cite[Equation (2.12)]{stam}, and proved in integral form  by Barron
\cite[Lemma 1]{barron}.  The de Bruijn identity considers $h_t$, the density of
$\sqrt{1-t} X + \sqrt{t} Z_{\sigma^2}$, which interpolates between a  given random variable $X$ with variance $\sigma^2$ and Gaussian $Z_{\sigma^2}$ with the same variance.
The de Bruijn identity can be understood in the context of the fact that $h_t$ satisfies a partial differential equation (PDE) of 
degree 2, the heat equation. As discussed in \cite[Example 2.5]{johnson34}, in this case we can state the heat equation in terms of
the standardized Fisher score function of Equation (\ref{eq:standscore}).
\begin{theorem}[Heat equation] \label{thm:heateqn}
Write $h_t$ for the density of $\sqrt{1-t} X + \sqrt{t} Z_{\sigma^2}$, where $X$ has variance $\sigma^2$. Then for $0 \leq t \leq 1$, density $h_t$ satisfies
\begin{equation} \label{eq:heat} \frac{ \partial h_t}{\partial t}(x) = \frac{ \sigma^2}{2(1-t)} \frac{ \partial}{\partial x}
\left( h_t(x) \left(\roof{h_t}(x) +  \frac{x}{\sigma^2} \right) \right). \end{equation}
\end{theorem}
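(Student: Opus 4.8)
The plan is to reduce Theorem~\ref{thm:heateqn} to the classical heat equation via the self-similarity of the Gaussian, and then read off the stated divergence form from the elementary identity $h_t \roof{h_t} = h_t'$. For $0 \le t < 1$ set $a = a(t) = \sqrt{1-t}$ and $s = s(t) = t/(1-t)$, and note that $a\sqrt{s} = \sqrt{t}$, so $\sqrt{1-t}\,X + \sqrt{t}\,Z_{\sigma^2}$ has the same law as $a\bigl(X + \sqrt{s}\,Z_{\sigma^2}\bigr)$. Writing $\widetilde h_s$ for the density of $X + \sqrt{s}\,Z_{\sigma^2}$, this gives the rescaling identity $h_t(x) = a^{-1}\,\widetilde h_s(x/a)$, valid for $0 \le t < 1$.

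First I would record that $\widetilde h_s$ solves the ordinary heat equation $\partial \widetilde h_s/\partial s = (\sigma^2/2)\,\partial^2\widetilde h_s/\partial x^2$ for $s>0$; since $\widetilde h_s$ is the convolution of the law of $X$ with the Gaussian density $\phi_{s\sigma^2}$, it is smooth in $(s,x)$ on $s>0$ with Gaussian-type decay, so one may differentiate under the integral sign, and the claim follows from the corresponding one-line fact for $\phi_{s\sigma^2}(x-y)$. Next, differentiating the rescaling identity in $t$ by the chain rule through $a(t)$ and $s(t)$, and substituting the heat equation for $\partial_s\widetilde h_s$, expresses $\partial h_t/\partial t$ as $-a^{-1}a'\,\widetilde h_s(x/a) + a^{-1}s'(\sigma^2/2)\,\widetilde h_s''(x/a) - x a^{-1}a'\,\widetilde h_s'(x/a)$. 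Differentiating the rescaling identity once and twice in $x$ rewrites $\widetilde h_s(x/a)$, $\widetilde h_s'(x/a)$, $\widetilde h_s''(x/a)$ as $a\,h_t(x)$, $a^2 h_t'(x)$, $a^3 h_t''(x)$ respectively; collecting coefficients (using $a'/a = -1/(2(1-t))$ and $a^2 s' = 1/(1-t)$) yields
\[
\frac{\partial h_t}{\partial t}(x) = \frac{\sigma^2}{2(1-t)}\left( h_t''(x) + \frac{1}{\sigma^2}\,h_t(x) + \frac{x}{\sigma^2}\,h_t'(x)\right) .
\]

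To finish, I would expand the right-hand side of (\ref{eq:heat}): since $\roof{h_t}(x) = h_t'(x)/h_t(x)$ we have $h_t(x)\bigl(\roof{h_t}(x) + x/\sigma^2\bigr) = h_t'(x) + (x/\sigma^2)\,h_t(x)$, and differentiating this in $x$ reproduces exactly the bracket in the display above, completing the identification.

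I expect the main obstacle to be analytic rather than algebraic: one must justify that $h_t$ is twice continuously differentiable in $x$ and that the derivatives in $t$ and $x$ may be exchanged with the convolution integral, which is precisely where the Gaussian smoothing (the presence of $\phi_{s\sigma^2}$ with $s>0$, i.e.\ $t<1$) is used. The endpoint $t = 1$ needs separate comment, since the prefactor $1/(1-t)$ is then singular: there $h_1 = \phi_{\sigma^2}$, the bracket above vanishes identically because $\roof{\phi_{\sigma^2}}(x) = -x/\sigma^2$, and the cleanest formulation is the equivalent non-singular form $2(1-t)\,\partial_t h_t(x) = \sigma^2\,\partial_x\bigl(h_t(x)(\roof{h_t}(x) + x/\sigma^2)\bigr)$, both sides of which are continuous on $[0,1]$ and agree on $[0,1)$; the case $t=0$ requires no regularity of $f$ itself.
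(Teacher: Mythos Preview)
Your argument is correct, and your discussion of the analytic justification and the endpoints $t=0,1$ is more careful than anything the paper spells out. (One transcription slip: in your intermediate expression for $\partial h_t/\partial t$ before substituting back, the coefficients should read $-a^{-2}a'$, $a^{-1}s'(\sigma^2/2)$, and $-xa^{-3}a'$; after inserting $\widetilde h_s(x/a)=a h_t(x)$ etc.\ these become $-(a'/a)$, $a^2 s'(\sigma^2/2)$, $-x(a'/a)$, and your final display is right, so the error is cosmetic.)

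Your route is genuinely different from the paper's. The paper does not prove Theorem~\ref{thm:heateqn} on its own; it obtains (\ref{eq:heat}) as the $\alpha=2$ specialisation of the stable PDE, Theorem~\ref{thm:pde}, together with Example~\ref{ex:scoreGaussian} identifying $\roo{X}{t}$ with $\roof{h_t}$ in the Gaussian case. The proof of Theorem~\ref{thm:pde} treats the two summands symmetrically: writing $h_t=f_t\star g^{(\alpha)}_{st}$ with $f_t$ the density of $(1-t)^{1/\alpha}X$, it differentiates each factor in $t$ directly from its scaling form, obtaining $\partial_t f_t(x-y)$ and $\partial_t g^{(\alpha)}_{st}(y)=-(\alpha t)^{-1}\partial_y(yg^{(\alpha)}_{st}(y))$, and then combines the two convolution identities via one integration by parts in $y$; the MMSE score emerges as the residual $y$-weighted convolution. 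You instead factor out $\sqrt{1-t}$ globally and reduce to the classical heat equation for $\widetilde h_s$. This is cleaner for $\alpha=2$ because the heat equation is ready-made, but it does not extend to $\alpha\neq 2$, since there is no simple second-order PDE for the density of $X+s^{1/\alpha}Z^{(\alpha)}_s$ to play the role of $\partial_s\widetilde h_s=(\sigma^2/2)\widetilde h_s''$; the paper's argument needs only the self-similarity common to all symmetric stable laws, which is why it generalises.
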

Using integration by parts (see Section \ref{sec:stabdeb} for a more general version of the argument), Equation (\ref{eq:heat}) implies the following result:
\begin{theorem}[de Bruijn identity, differential form] \label{thm:debruijn}
Write $h_t$ for the density of $\sqrt{1-t} X + \sqrt{t} Z_{\sigma^2}$, where $X$ has variance $\sigma^2$. Then for $0 \leq t \leq 1$,
the relative entropy and standardized Fisher information are related by
\begin{equation} \label{eq:debruijn} \frac{ dD(h_t \| \phi_{\sigma^2})}{dt} = - \frac{1}{2(1-t)} J(h_t). \end{equation}
\end{theorem}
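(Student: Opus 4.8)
The plan is to differentiate the relative entropy directly under the integral sign and substitute the heat equation of Theorem \ref{thm:heateqn}. Writing $D(h_t \| \phi_{\sigma^2}) = \int h_t(x) \log h_t(x)\, dx - \int h_t(x) \log \phi_{\sigma^2}(x)\, dx$, I would differentiate in $t$, passing the derivative inside, to get
\begin{equation} \label{eq:plan1} \frac{d}{dt} D(h_t \| \phi_{\sigma^2}) = \int \frac{\partial h_t}{\partial t}(x) \log \left( \frac{h_t(x)}{\phi_{\sigma^2}(x)} \right) dx + \int \frac{\partial h_t}{\partial t}(x)\, dx. \end{equation}
The second term vanishes: since $\int h_t(x)\, dx = 1$ for all $t$, interchanging derivative and integral gives $\int \partial_t h_t\, dx = 0$. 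So only the first term survives.

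Next I would substitute the heat equation (\ref{eq:heat}) for $\partial_t h_t$ into the first term of (\ref{eq:plan1}), obtaining
\begin{equation} \label{eq:plan2} \frac{d}{dt} D(h_t \| \phi_{\sigma^2}) = \frac{\sigma^2}{2(1-t)} \int \frac{\partial}{\partial x}\left( h_t(x) \left( \roof{h_t}(x) + \frac{x}{\sigma^2} \right) \right) \log \left( \frac{h_t(x)}{\phi_{\sigma^2}(x)} \right) dx, \end{equation}
and then integrate by parts in $x$. Assuming the boundary terms vanish, this equals
\begin{equation} \label{eq:plan3} -\frac{\sigma^2}{2(1-t)} \int h_t(x) \left( \roof{h_t}(x) + \frac{x}{\sigma^2} \right) \frac{\partial}{\partial x} \log \left( \frac{h_t(x)}{\phi_{\sigma^2}(x)} \right) dx. \end{equation}
Now I invoke the key identity (\ref{eq:standscore}), which says that $\frac{\partial}{\partial x} \log ( h_t(x)/\phi_{\sigma^2}(x) )$ is precisely the standardized Fisher score function $\roof{h_t}(x) + x/\sigma^2$. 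Hence the integrand in (\ref{eq:plan3}) becomes $h_t(x)\,(\roof{h_t}(x) + x/\sigma^2)^2$, and comparing with Definition \ref{def:quantfish} the whole expression equals $-\frac{1}{2(1-t)} J(h_t)$, which is (\ref{eq:debruijn}).

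The routine algebra is trivial; the real content is the analytic justification of the two interchanges — differentiating under the integral in (\ref{eq:plan1}) and discarding the boundary terms in the integration by parts leading to (\ref{eq:plan3}). This is exactly the delicate point that \cite[Lemma 1]{barron} addresses, and it requires controlling the tails of $h_t$, $\partial_t h_t$, and the score function; the convolution with a Gaussian for $t>0$ smooths $h_t$ and gives rapid tail decay, which is what makes these steps legitimate. I expect to either cite Barron's integral-form argument or sketch the dominated-convergence and tail-decay estimates in Section \ref{sec:stabdeb}, where the analogous (and more involved) stable-law computation is carried out in detail.
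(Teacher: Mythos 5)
Your proposal is correct and is essentially the paper's own argument: the paper proves Theorem \ref{thm:debruijn} by exactly this route (differentiate $D$, drop the vanishing normalization term, substitute the heat equation, integrate by parts, and identify the resulting factor via (\ref{eq:standscore}) as the square of the standardized Fisher score), carried out in the more general stable setting as Theorem \ref{thm:pdequant} and then specialized to $\alpha=2$ using Example \ref{ex:scoreGaussian}. Your remarks on the analytic justification (interchange of derivative and integral, vanishing boundary terms) match the paper's own caveat that the integration by parts assumes good behaviour at infinity, with Barron's lemma as the rigorous reference.
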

The main contributions of this paper are Theorems \ref{thm:pde} and \ref{thm:pdequant}, which extend Theorems \ref{thm:heateqn} and \ref{thm:debruijn} to the case of 
stable random variables.
\subsection{Maximum entropy and domains of normal attraction}
It is perhaps not a surprise that the relative entropy $D(f_n \| \phi_{\sigma^2})$ converges to zero in the Central Limit Theorem regime. It is well known (see for example \cite{johnson14} for a review) that under natural 
conditions, the entropy is maximised by probability densities of exponential family form. In particular (see \cite[Section 20]{shannon2})
if we fix the variance to be $\sigma^2$, the entropy is uniquely
maximised by the Gaussian density $\phi_{\sigma^2}$.

Of course, in the case of IID summands $X_i$, the standard CLT `square root' normalization does indeed fix the variance. In fact, assuming we normalize the sum of $n$ IID random variables by $\sqrt{n}$, finiteness of the variance of summands
is a necessary and sufficient condition for weak convergence to $\phi_{\sigma^2}$ (see \cite[Theorem 4, P181]{gnedenko}).
In general this assumption on the normalization defines the so-called
`domain of normal attraction':
\begin{definition} \label{def:dna} The domain of normal attraction of a stable law
is the set of $X$ such that taking $X_i$ are IID copies of $X$, then for some $A_n$ and $a$:
\begin{equation} \frac{ \sum_{i=1}^n X_i}{a n^{1/\alpha}} - A_n \mbox{ converges weakly to the stable law.} \label{eq:dna}
\end{equation} 
\end{definition}
 Hence, in some sense we understand the Central Limit Theorem as describing 
 convergence to a maximum entropy state. By combining these results of Shannon \cite{shannon2} and Gnedenko and Kolmogorov
\cite{gnedenko}  we state:
\begin{corollary} \label{cor:maxentdna}
The Gaussian density uniquely maximises entropy within its own domain of normal attraction. \end{corollary}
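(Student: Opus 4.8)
The plan is to derive Corollary~\ref{cor:maxentdna} by assembling the two results quoted immediately before it: Shannon's variance-constrained maximum entropy property of the Gaussian, and the Gnedenko--Kolmogorov description of the domain of normal attraction. First, though, I would fix the interpretation of the statement: since the domain of normal attraction (Definition~\ref{def:dna}) is invariant under rescaling, it contains Gaussians of every variance, and hence densities of arbitrarily large entropy, so the corollary must be read as comparing $H(f)$ among those densities $f$ in the domain of normal attraction of $\phi_{\sigma^2}$ that share its variance $\sigma^2$. We also tacitly restrict to those $X$ admitting a density, so that $H$ is defined at all.

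The first step is to locate this family inside $\{ f : \var(f) = \sigma^2 \}$. By the Gnedenko--Kolmogorov theorem (\cite[Theorem~4, P181]{gnedenko}), since the normalizing constants in~(\ref{eq:dna}) are of order $n^{1/\alpha} = n^{1/2}$, every random variable in the domain of normal attraction of the Gaussian has a finite second moment; together with the variance-matching convention just imposed, this shows that each density $f$ in the family under consideration satisfies $\var(f) = \sigma^2$. Since $\phi_{\sigma^2}$ plainly lies in its own domain of normal attraction, the family is a subset of $\{ f : \var(f) = \sigma^2 \}$ that contains $\phi_{\sigma^2}$.

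The second step is then immediate from Shannon's theorem (\cite[Section~20]{shannon2}): over the whole of $\{ f : \var(f) = \sigma^2 \}$, the entropy $H(f)$ is uniquely maximised by $\phi_{\sigma^2}$, so a fortiori $\phi_{\sigma^2}$ is the unique maximiser within the smaller family, which is the corollary. It is worth noting that the role of the Gnedenko--Kolmogorov input is really to identify the domain of normal attraction with, in essence, the finite-variance class, which is what makes the corollary a natural reformulation of Shannon's result; the entropy inequality itself, once the variance is pinned down, is Shannon's.

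I would not expect a genuine obstacle here, the content lying entirely in the two cited theorems and in stating the comparison carefully. If forced to name the delicate ingredient, it is the implication used in the first step that $\sqrt{n}$ normalization forces a finite variance on the summands; this is the non-elementary half of the Gnedenko--Kolmogorov characterization, which we simply quote.
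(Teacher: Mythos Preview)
Your proposal is correct and follows essentially the same approach as the paper, which does not give a separate proof but simply states that the corollary follows ``by combining these results of Shannon \cite{shannon2} and Gnedenko and Kolmogorov \cite{gnedenko}.'' Your added care about reading the statement under a fixed-variance convention is a genuine clarification that the paper leaves implicit, but the underlying argument---identify the domain of normal attraction with the finite-variance class via \cite{gnedenko}, then invoke Shannon's maximum entropy characterisation---is the same.
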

\subsection{Monotonicity and the Entropy Power Inequality}
This link suggests a stronger result; since $f_n$ converges weakly to $\phi_{\sigma^2}$ and $H(f_n) \leq H(\phi_{\sigma^2})$, 
it is natural to wonder whether $H(f_n)$ increases monotonically in $n$.
Indeed Barron writes in the Acknowledgement of  \cite{barron}  that
``[Professor Tom] Cover showed that Shannon's entropy power inequality implies the monotonicity of the entropy and
he posed the problem of identifying the limit.'' To expand on this, Shannon \cite[Theorem 15]{shannon2} stated the Entropy Power
Inequality (EPI), which gives a sharp bound on the entropy of the convolution of probability densities. The EPI was formally
proved by Stam \cite{stam} and Blachman \cite{blachman}, using arguments based on the de Bruijn identity, Theorem
\ref{thm:debruijn}, under the assumption of finite variance of the densities. We state the result under weaker assumptions, in a form due to 
Lieb \cite[Theorem 6]{lieb2}:
\begin{theorem}[Entropy Power Inequality] \label{thm:epi}
If,  for some
$p > 1$,  probability densities $f_X, f_Y \in L^p(dx)$ then
\begin{equation} 2^{2H( f_X \star f_Y )} \geq 2^{2H(f_X)} + 2^{2H(f_Y)}, \label{eq:epi} \end{equation}
with equality if and only if $f_X$ and $f_Y$ are Gaussian densities.
\end{theorem}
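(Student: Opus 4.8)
The plan is to prove the equivalent entropy-power form $e^{2H(f_X \star f_Y)} \ge e^{2H(f_X)} + e^{2H(f_Y)}$, with all logarithms taken to base $e$ (the stated form (\ref{eq:epi}) is then equivalent via the change-of-base relation $2^{2H_2} = e^{2H_e}$ between base-$2$ and natural entropies). We may assume $H(f_X)$, $H(f_Y)$ and $H(f_X \star f_Y)$ are all finite: a density $f \in L^p$ with $p>1$ satisfies $\int f \log^+ f < \infty$ (from $\log t \le t^{p-1}/(p-1)$ for $t \ge 1$), so $H(f)$ is bounded below and never equals $-\infty$; and if $H(f_X) = +\infty$ then, for independent $X \sim f_X$, $Y \sim f_Y$, $H(f_X \star f_Y) = H(X+Y) \ge H(X+Y \mid Y) = H(X) = +\infty$, so (\ref{eq:epi}) holds (symmetrically for $f_Y$, and trivially if $H(f_X \star f_Y) = +\infty$). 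Granting finiteness, \emph{if in addition} $\var f_X, \var f_Y < \infty$ one could run the Stam--Blachman argument --- perturb $X$ and $Y$ by independent small Gaussian noise, combine the de Bruijn identity (Theorem~\ref{thm:debruijn}) with superadditivity of the reciprocal Fisher information to make the entropy power superadditive along the heat flow, then let the noise vanish --- but that needs the moment hypothesis, so for the stated generality I would follow Lieb and start from the \emph{sharp} Young convolution inequality (Beckner; Brascamp--Lieb): for $p,q,r>1$ with $\tfrac1p + \tfrac1q = 1 + \tfrac1r$,
\begin{equation} \label{eq:sharpyoung} \| f \star g \|_r \le A_p A_q A_{r'} \, \| f \|_p \, \| g \|_q, \qquad A_s := \left( \frac{s^{1/s}}{(s')^{1/s'}} \right)^{1/2}, \quad s' := \frac{s}{s-1}, \end{equation}
with equality exactly when $f$ and $g$ are suitably matched centred Gaussian densities.

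I would then linearise (\ref{eq:sharpyoung}) about the degenerate exponent triple $(1,1,1)$. Fix $a,b>0$ and set $p = 1 + a\epsilon$, $q = 1 + b\epsilon$; the constraint forces $r = 1 + (a+b)\epsilon + O(\epsilon^2)$. Taking logarithms in (\ref{eq:sharpyoung}), the function $\Psi(\epsilon) := \log \| f_X \star f_Y \|_r - \log\!\big( A_p A_q A_{r'} \big) - \log \| f_X \|_p - \log \| f_Y \|_q$ satisfies $\Psi(\epsilon) \le 0$ for small $\epsilon>0$ and $\Psi(0)=0$ (every norm tends to the $L^1$ norm $1$ and the constant to $1$), whence $\Psi'(0^+) \le 0$. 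Evaluating this one-sided derivative needs only the elementary identity $\frac{d}{ds} \log \| h \|_s \big|_{s=1^+} = -H(h)$ for $h \in L^1 \cap L^p$, together with the expansion of $\log A_s$ near $s=1$, in which the separately divergent $\log\epsilon$ terms from $A_p$, $A_q$ and $A_{r'}$ cancel, leaving $\frac{d}{d\epsilon}\log\!\big( A_p A_q A_{r'} \big)\big|_{\epsilon=0} = \tfrac12\big( a\log a + b\log b - (a+b)\log(a+b) \big)$. Writing $\lambda := a/(a+b) \in (0,1)$, the inequality $\Psi'(0^+) \le 0$ rearranges to
\begin{equation} \label{eq:linearised} H(f_X \star f_Y) \ge \lambda H(f_X) + (1-\lambda) H(f_Y) - \tfrac12 \lambda \log \lambda - \tfrac12 (1-\lambda)\log(1-\lambda), \end{equation}
for every $\lambda \in (0,1)$.

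Finally I would exponentiate and optimise (\ref{eq:linearised}): it reads $e^{2H(f_X \star f_Y)} \ge \big( e^{2H(f_X)}/\lambda \big)^{\lambda} \big( e^{2H(f_Y)}/(1-\lambda) \big)^{1-\lambda}$, and the right-hand side is maximised over $\lambda$ at $\lambda^{\ast} = e^{2H(f_X)}/\big( e^{2H(f_X)} + e^{2H(f_Y)} \big)$, where it equals $e^{2H(f_X)} + e^{2H(f_Y)}$ --- giving the inequality. For equality, a pair of Gaussians gives equality in (\ref{eq:epi}) by direct computation, while conversely equality in (\ref{eq:epi}) forces equality in (\ref{eq:linearised}) at $\lambda = \lambda^{\ast}$, and transporting the Gaussian-extremiser characterisation of (\ref{eq:sharpyoung}) back through the limiting procedure then forces $f_X$ and $f_Y$ to be Gaussian. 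I expect two things to be the real obstacles. The substantive mathematical content is imported --- the sharp constant in (\ref{eq:sharpyoung}) with its Gaussian extremisers --- and I would quote it rather than reprove it. The genuine work on our side is then (i) to justify the termwise passage $\epsilon \downarrow 0$ rigorously, especially the right-differentiability of $s \mapsto \log \| h \|_s$ at $s=1$ with derivative $-H(h)$, which rests on $\int f \log^+ f < \infty$ and a dominated/monotone-convergence argument for $\int h^s \log h$ as $s \downarrow 1$; and (ii) to extract the equality case cleanly, since (\ref{eq:linearised}) is only a first-order consequence of (\ref{eq:sharpyoung}).
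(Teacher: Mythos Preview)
The paper does not give its own proof of Theorem~\ref{thm:epi}: it is stated as a background result and attributed to Lieb \cite[Theorem~6]{lieb2}, with the paper merely remarking that Stam and Blachman's original arguments required finite variance while Lieb's does not. There is therefore nothing in the paper to compare against beyond the citation itself.

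Your proposal is in fact a faithful sketch of Lieb's own argument --- sharp Young inequality with Beckner/Brascamp--Lieb constants, differentiation of the $L^s$ norm at $s=1$ to extract entropies, cancellation of the logarithmic singularities among the $A_s$ constants, and optimisation over $\lambda$ --- so you have essentially reproduced the proof the paper is citing. The sketch is correct; the two caveats you flag (justifying the right-derivative of $s\mapsto\log\|h\|_s$ at $s=1$ under the $L^p$ hypothesis, and recovering the equality case from a first-order limit of the sharp Young equality characterisation) are exactly the places where care is needed, and Lieb handles them in the cited reference.
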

 Recent work \cite{bobkov9} clarifies further the conditions under which this result holds. Indeed, \cite[Corollary V6]{bobkov9}
shows that Equation (\ref{eq:epi}) holds if the entropies of $f_X$, $f_Y$ and $f_X \star f_Y$ exist. However
Lieb's result Theorem \ref{thm:epi} is sufficient for our purposes.

 In the context of the Central Limit Theorem, as Barron remarks, for IID variables the EPI implies that
along the `powers-of-two' subsequences $H( f_{2^k})$ and $D( f_{2^k} \| \phi_{\sigma^2})$ are monotone in $k$. Monotonicity
of the full entropy sequence $H(f_n)$, and equivalently of relative entropy, was only proved much later by Artstein et al.
\cite{artstein}, with later extensions from Madiman and Barron \cite{madiman}. This monotone increase of entropy 
suggests an interpretation of the Central Limit Theorem as a counterpart to the Second Law of Thermodynamics.
\subsection{Noisy communication channels}
An alternative perspective on all these results is given by Guo, Shamai and Verd\'{u} \cite{guo}, motivated
by communication through Gaussian channels. Instead of using the de
Bruijn identity (Theorem \ref{thm:debruijn}), they show that the derivative
of a certain mutual information quantity can be expressed in terms of  the minimum mean-squared error (MMSE) of the 
corresponding noisy channel. 
\begin{definition} \label{def:mmse}
For a noisy communication channel with input $X$ and output $Y$, we  measure the quality of a decoding rule
$\wh{X} = f(Y)$ in terms of the mean squared error $\ep( X - \wh{X})^2 = \ep(X - f(Y))^2$. It is well known that the optimal
decoding rule in this sense is of the form $\whopt{X} = \ep( X | Y)$, with corresponding MMSE equal to
\begin{equation} \label{eq:mmse}
\mmse( X | Y) := \ep \left( X - \whopt{X} \right)^2 = \ep \left( X - \ep(X | Y) \right)^2.
\end{equation}
\end{definition}
Using this definition, we now state \cite[Theorem 1]{guo}.
\begin{theorem}  \label{thm:mmse}
Consider input random variable $X$ (of finite variance) and output $Y$ linked by $Y = \sqrt{\snr} X + Z$, where
$Z$ is a standard Gaussian, and $\snr$ is a positive real parameter. Then
\begin{equation}
\frac{d}{d \snr} I(X; \sqrt{\snr} X + Z) = \frac{1}{2} \mmse( X | \sqrt{\snr} X + Z),
\end{equation}
where we write $I(U;V)$ for the mutual information between $U$ and $V$.
\end{theorem}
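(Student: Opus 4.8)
The plan is to combine the de Bruijn identity with Tweedie's formula for the Gaussian channel. Write $Y_\snr = \sqrt{\snr}\,X + Z$ and let $f_{Y_\snr}$ be its density; since $Z$ is independent of $X$, this density is smooth and strictly positive for every $\snr > 0$. As $Y_\snr$ conditioned on $X$ is Gaussian of unit variance, $I(X;Y_\snr) = H(f_{Y_\snr}) - \tfrac12 \log(2\pi e)$, where $H$ denotes differential entropy, so it suffices to evaluate $\tfrac{d}{d\snr}H(f_{Y_\snr})$. First I would introduce the heat flow $V_t := X + \sqrt{t}\,Z$: from the identity in law $Y_\snr = \sqrt{\snr}\, V_{1/\snr}$ and the scaling rule $H(aU) = H(U) + \log|a|$ we get $H(f_{Y_\snr}) = H(f_{V_{1/\snr}}) + \tfrac12\log\snr$. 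Applying the de Bruijn identity in Stam's differential form \cite{stam} to the flow $t \mapsto V_t$ --- essentially the integration-by-parts argument underlying Theorem~\ref{thm:debruijn} --- gives $\tfrac{d}{dt}H(f_{V_t}) = \tfrac12\,\mathcal{J}(f_{V_t})$, where $\mathcal{J}(g) := \int g(x)\,\roof{g}(x)^2\,dx$ denotes the (unstandardised) Fisher information. Combining this with $\tfrac{d}{d\snr}(1/\snr) = -1/\snr^2$ and the dilation rule $\mathcal{J}(aU) = a^{-2}\mathcal{J}(U)$ --- which turns $\mathcal{J}(f_{V_{1/\snr}})$ into $\snr\,\mathcal{J}(f_{Y_\snr})$ --- yields
\[
\frac{d}{d\snr}\, I(X;Y_\snr) \;=\; -\frac{1}{2\snr}\,\mathcal{J}(f_{Y_\snr}) + \frac{1}{2\snr} \;=\; \frac{1 - \mathcal{J}(f_{Y_\snr})}{2\snr}.
\]

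It then remains to recognise the right-hand side as $\tfrac12\,\mmse(X | Y_\snr)$. Here I would use Tweedie's representation of the Bayes estimator: differentiating $f_{Y_\snr}(y) = \ep[\phi_1(y - \sqrt{\snr}X)]$ in $y$ under the expectation (with $\phi_1$ the standard Gaussian density, $\phi_1'(u) = -u\phi_1(u)$) gives $\roof{f_{Y_\snr}}(y) = -y + \sqrt{\snr}\,\ep(X | Y_\snr = y)$, that is, $\sqrt{\snr}\,\bigl(X - \ep(X | Y_\snr)\bigr) = -Z - \roof{f_{Y_\snr}}(Y_\snr)$. Squaring and taking expectations, $\snr\,\mmse(X | Y_\snr) = 1 + 2\,\ep\bigl[Z\,\roof{f_{Y_\snr}}(Y_\snr)\bigr] + \mathcal{J}(f_{Y_\snr})$; the cross term is evaluated by conditioning on $X$ and applying Stein's lemma to the Gaussian $Z$, then integrating by parts, which gives $\ep\bigl[Z\,\roof{f_{Y_\snr}}(Y_\snr)\bigr] = \ep\bigl[(\roof{f_{Y_\snr}})'(Y_\snr)\bigr] = -\mathcal{J}(f_{Y_\snr})$. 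Hence $\snr\,\mmse(X | Y_\snr) = 1 - \mathcal{J}(f_{Y_\snr})$, which matches the display above and completes the identification.

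The main obstacle is analytic rather than algebraic: one has to justify differentiating $H(f_{V_t})$ under the integral and show that the boundary terms vanish in the two integrations by parts. Both follow from the regularising effect of convolution with a Gaussian --- for every $\snr > 0$ the density $f_{Y_\snr}$ is smooth, positive and of finite Fisher information (indeed $\mathcal{J}(f_{V_t}) \le 1/t$ by Stam's convolution inequality), while the hypothesis $\var(X) < \infty$ ensures $H(f_{Y_\snr})$, $\mmse(X | Y_\snr)$ and the relevant second moments are all finite; the same smoothing controls the interchange of limit and expectation in the Tweedie step. A careful treatment along exactly these lines, including the behaviour as $\snr \downarrow 0$, is carried out by Guo, Shamai and Verd\'{u} \cite{guo}, who instead differentiate the mutual information directly via an ``incremental channel'' argument; the route above has the advantage of reusing the de Bruijn machinery already assembled for Theorem~\ref{thm:debruijn}.
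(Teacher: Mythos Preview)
Your argument is correct. The algebra in both displays checks out: the chain rule plus the scaling law for Fisher information give $\tfrac{d}{d\snr}I(X;Y_\snr)=(1-\mathcal{J}(f_{Y_\snr}))/(2\snr)$, and Tweedie plus Stein plus one integration by parts give $\snr\,\mmse(X|Y_\snr)=1-\mathcal{J}(f_{Y_\snr})$, so the two match.

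The paper does not prove Theorem~\ref{thm:mmse} where it is stated (it is quoted from \cite{guo}); instead it recovers it in Section~\ref{sec:mutinf} as the $\alpha=2$, $s=1$ specialisation of Theorem~\ref{thm:verdu}. That route runs through the variance-preserving interpolation $X_t=\sqrt{1-t}\,X+\sqrt{t}\,Z$, the generalised heat equation (Theorem~\ref{thm:pde}), and the MMSE-score identity (\ref{eq:scoremse}); the connection to $\mmse$ is then made via the projection calculation $\ep[\wh{X}X_t]=\sqrt{1-t}$, and the final step is the change of variables $\snr=(1-t)/t$. Your approach is more direct and classical: you use the additive heat flow $V_t=X+\sqrt{t}\,Z$ with Stam's unnormalised de Bruijn identity, and you identify the Fisher term with the MMSE via Tweedie and Stein's lemma rather than through the paper's MMSE-score framework. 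The skeleton (write $I=H(\text{output})-H(\text{noise})$, differentiate $H$, relate score to conditional mean) is the same, but the paper's derivation is a corollary of its new stable-law machinery, whereas yours needs only Gaussian tools already available before Theorem~\ref{thm:pde}. The payoff of the paper's route is that it exhibits Theorem~\ref{thm:mmse} as one instance of the general identity (\ref{eq:channel}); the payoff of yours is a self-contained proof that avoids the detour through Theorem~\ref{thm:verdu}.
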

Verd\'{u} and Guo \cite{verdu} use these ideas to give an alternative proof of the Entropy Power Inequality, Theorem \ref{thm:epi}. 
\section{Summary of extensions to stable case}
However, note that all the theory described so far is tailored to the case of a Gaussian limit. It turns out that the 
Gaussian density has a number of attractive properties, which are  not easy to generalize to stable laws. 
In this case, as described for example in \cite[Chapter 7]{gnedenko}, there exists a fully
developed theory of necessary and sufficient conditions for domains of normal attraction for weak convergence, in the sense of (\ref{eq:dna}). However,
there are very few published results in the context of entropy and stable laws.

One reason that such results are elusive is that in general stable laws do not possess
moments of all orders, and most stable laws do not even have
densities that can be written down in closed form. The books by Samorodnitsky and Taqqu \cite{samorodnitsky} and Zolotarev
\cite{zolotarev3} review many of the relevant properties of stable laws.
For simplicity, we
will concentrate exclusively on the symmetric case where (in terms of the standard 
parameterization) $\beta = \mu = 0$ -- see Definition \ref{def:stabdens} below for a formal definition.
We now briefly summarise the main results and structure of this paper:
\begin{enumerate} \addtolength{\itemsep}{-5pt}

\item{ {\bf [Score properties, Section \ref{sec:scoredef}]}}  In Definition \ref{def:score},
we introduce a conditional expectation-based quantity,
which we call the MMSE score $\roo{X}{t}$.  Example \ref{ex:scoreGaussian} shows that
this MMSE score  $\roo{X}{t}(x)$ reduces to the standard Fisher score $\roof{h_t}$ in the 
Gaussian case ($\alpha = 2$).
Symmetric stable laws themselves have  score equal to $-x/s$ (see Lemma \ref{lem:tech})
which suggests
that $\roo{X}{t}(x)+x/s$ should be seen as the standardized MMSE score.

\item{ {\bf [Stable equivalent of the heat equation, Section \ref{sec:pde}]}} \label{prop:score}
For all symmetric stable laws
there exists a PDE  
in terms of $\roo{X}{t}$   (see Proposition \ref{thm:pde}).  This reduces to the heat equation, (\ref{eq:heat}), in the
Gaussian case ($\alpha = 2$).

\item{ {\bf [de Bruijn identity and channel derivative, Section \ref{sec:ders}]}}
Using this PDE, 
we deduce expressions for
the derivative of  (a) relative entropy -- see
Theorem \ref{thm:pdequant}, which generalizes the de Bruijn identity Theorem \ref{thm:debruijn}, and
(b) channel mutual information
-- see Theorem \ref{thm:verdu}, which generalizes Guo, Shamai and Verd\'{u}'s result, Theorem \ref{thm:mmse}.
\item{ {\bf [Maximum entropy and domains of normal attraction, Section \ref{sec:maxent}]}}
It is not the case that non-Gaussian stable laws are maximum
entropy within their own domains of normal attraction (see Lemma \ref{lem:notdoa}, where we prove that 
a counterpart of Corollary \ref{cor:maxentdna} does not hold in general).
 We can use the arguments described above to give a (not transparent) condition under which
the Cauchy is maximum entropy (see Lemma \ref{lem:maxent}).
\end{enumerate}

In Section \ref{sec:open} we conclude with some open problems, solution of which can help prove convergence in 
relative entropy to a symmetric stable law, using arguments in the spirit of Brown \cite{brown} and Barron \cite{barron}.

Note that Bobkov, Chistyakov and G\"{o}tze 
 consider entropy and stable laws
  in their paper  \cite{bobkov7}, which develops and extends the methods they introduced in
\cite{bobkov6}.  In some sense, their approach can perhaps be seen as a rigorous development of the ideas of Linnik
\cite{linnik,linnik2}, including proofs of bounds on the tails of characteristic functions, and hence of densities. In
particular, they do not provide maximum entropy results, or identities of de Bruijn type. However, their ideas are poweful 
enough to prove an optimal rate of convergence in the Central Limit Theorem regime in \cite{bobkov6}.

We first define the relevant stable laws:
\begin{definition}  \label{def:stabdens}
Given an exponent $\alpha \in (0,2]$, write $Z_s^{(\alpha)}$ for a centered $\alpha$-stable random variable with 
characteristic equation $\exp(- s|\theta|^{\alpha})$. In the standard parameterization, such a random variable has
scale 
parameter $c=s^{1/\alpha}$, shift parameter $\mu = 0$ and skewness
parameter $\beta = 0$. We write  that  $Z_s^{(\alpha)}$ has density 
\begin{equation} \label{eq:stabdens}  g_{s}^{(\alpha)}(x) = \frac{\zz}{s^{1/\alpha}} 
g \left( \frac{x}{s^{1/\alpha}} \right),\end{equation}
where the scaling constant $\zz$ is chosen such that $g(0) = 1$,
and the stability property implies that  the sum $Z_s^{(\alpha)} + Z_t^{(\alpha)} = Z_{s+t}^{(\alpha)}$,
or the convolution $g_s^{(\alpha)} \star g_t^{(\alpha)} = g_{s+t}^{(\alpha)}$. We write  $Z^{(\alpha)}$ for a standard variable
(i.e. with $s=1$) and $g^{(\alpha)}$ for its density.
\end{definition}
Since this notation may be slightly unfamiliar, we give two examples that fit in this framework, where we can be completely
explicit about the stable laws in question.
\begin{example} Using the notation of Definition \ref{def:stabdens}:
\begin{enumerate}
\item For $\alpha = 2$, taking $g(x) = \exp(-x^2/2)$ and
 $\zz = 1/\sqrt{2 \pi}$, we recover the Gaussian density with 
$s$ being the variance.
\item For $\alpha = 1$, taking $g(x) = 1/(1+x^2)$ and 
$\zz = 1/\pi$, we recover the Cauchy density.
\end{enumerate}
\end{example}
\section{Score function definition} \label{sec:scoredef}
Given a particular random variable $X$, we will suppose that we have reason to compare it with a particular stable random
variable $Z_s^{(\alpha)}$ of the form of Definition \ref{def:stabdens}. For example, we may suppose that $X$ is in the domain of
normal attraction of $Z_s^{(\alpha)}$.  We use a moment matching argument; in the case where $\alpha > 1$, since $\ep Z^{(\alpha)}_s = 0$, we use it to
approximate random variables $X$ with $\ep X = 0$.

We now define the MMSE score (with respect to $Z_s^{(\alpha)}$), denoted by $\roo{X}{t}(x)$, which is one of the main tools used in this paper:
\begin{definition} \label{def:score} 
Given a random variable $X$ with density $f$, we write $X_t = (1-t)^{1/\alpha}X + 
t^{1/\alpha} Z^{(\alpha)}_s$ for a sequence of random variables
which interpolate between $X$ and $Z^{(\alpha)}_s$. 
We write 
$f_t$ for the density of $(1-t)^{1/\alpha} X$, and $h_t$ for
the density of $X_t$.

For each $0 \leq t \leq 1$,
we define the  MMSE score function of $X$  as 
\begin{equation} \label{eq:scoredef}
 \roo{X}{t}(x) = - \frac{\ep \left[ t^{1/\alpha} Z^{(\alpha)}_s |  X_t = x \right]}{s t}
= - \frac{ \int_{-\infty}^{\infty} f_t(x-y) y g^{(\alpha)}_{s t}(y) dy}
{s t h_t(x)}.\end{equation}
\end{definition}
Observe that there is a clear MMSE interpretation to this score, which can be expressed in terms of an optimal
estimator, in the spirit of Definition \ref{def:mmse} and \cite{guo}. We explore this further in Section \ref{sec:mutinf} below.
Further, this MMSE score reduces to the Fisher score function of Definition \ref{def:quantfish} in the Gaussian case:
\begin{example} \label{ex:scoreGaussian}
For $\alpha= 2$, the $g_s^{(\alpha)}$ is Gaussian, and since
$y g_{st}^{(\alpha)} = y \exp(-y^2/(2t s)) = - s t \frac{d}{dy} \exp(-y^2/(2s t))$, then using integration by parts:
\begin{eqnarray} \int_{-\infty}^{\infty} f_t(x-y) y g^{(\alpha)}_{st}(y) dy
& = & s t \int_{-\infty}^{\infty} \frac{\partial f_t}{\partial y}(x-y) 
g^{(\alpha)}_{s t}(y) dy \nonumber \\
& = & - s t \int_{-\infty}^{\infty} \frac{\partial f_t}
{\partial x}(x-y) g^{(\alpha)}_{st}(y) dy = -s t  h'_t(x),
\end{eqnarray}
 so for each $t$, the
$\roo{X}{t} = \roof{h_t}$, the Fisher score of  Definition \ref{def:quantfish}.
\end{example}
We now show that symmetric stable laws, as in Definition \ref{def:stabdens}, have linear score:
\begin{lemma} \label{lem:tech}
For any $u$ and $v$, if $g^{(\alpha)}_s$ is of the form of Definition \ref{def:stabdens}
then
\begin{equation} \label{eq:condexp}
 \int_{-\infty}^{\infty} g^{(\alpha)}_u(x-y) g^{(\alpha)}_v(y) y dy = \frac{ v x}{u+v} g^{(\alpha)}_{u+v}(x)
\mbox{\;\;\;\; for all $x$.} \end{equation}
This means that for $X \sim g_s^{(\alpha)}$ itself stable, the MMSE score is $\roo{X}{t}(x) =-x/s$, or
equivalently the standardized MMSE score $\roo{X}{t}(x) + x/s$ vanishes.
\end{lemma}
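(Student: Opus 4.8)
The plan is to prove the convolution identity \eqref{eq:condexp} using characteristic functions, since that is the one tool that is guaranteed to be tractable for stable laws even when the densities themselves are not explicit. First I would observe that both sides of \eqref{eq:condexp} are integrable functions of $x$, so it suffices to show that their Fourier transforms agree. The left-hand side, call it $L(x) = \int g^{(\alpha)}_u(x-y)\, g^{(\alpha)}_v(y)\, y\, dy$, is a convolution of $g^{(\alpha)}_u$ with the function $y \mapsto y\, g^{(\alpha)}_v(y)$; hence $\wh{L}(\theta) = \wh{g^{(\alpha)}_u}(\theta) \cdot \mathcal{F}[y\, g^{(\alpha)}_v(y)](\theta)$. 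Now $\wh{g^{(\alpha)}_u}(\theta) = \exp(-u|\theta|^\alpha)$ by Definition \ref{def:stabdens}, and multiplication by $y$ in the spatial variable corresponds (up to the usual factor of $i$) to differentiation in $\theta$ of the characteristic function, so $\mathcal{F}[y\, g^{(\alpha)}_v(y)](\theta) = i \frac{d}{d\theta} \exp(-v|\theta|^\alpha) = i \exp(-v|\theta|^\alpha) \cdot (-v\alpha |\theta|^{\alpha-1}\operatorname{sgn}\theta)$ for $\theta \neq 0$.

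Next I would compute the Fourier transform of the right-hand side, $R(x) = \frac{v x}{u+v} g^{(\alpha)}_{u+v}(x)$. Again using that multiplication by $x$ turns into $i\frac{d}{d\theta}$, we get $\wh{R}(\theta) = \frac{v}{u+v} \cdot i \frac{d}{d\theta}\exp(-(u+v)|\theta|^\alpha) = \frac{v}{u+v}\cdot i \exp(-(u+v)|\theta|^\alpha)\cdot(-(u+v)\alpha|\theta|^{\alpha-1}\operatorname{sgn}\theta)$. Simplifying, the factor $u+v$ cancels and this equals $-i v \alpha |\theta|^{\alpha-1}\operatorname{sgn}\theta \exp(-(u+v)|\theta|^\alpha)$, which is exactly $\exp(-u|\theta|^\alpha)$ times $i\exp(-v|\theta|^\alpha)(-v\alpha|\theta|^{\alpha-1}\operatorname{sgn}\theta) = \wh{L}(\theta)$. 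So the two transforms coincide for $\theta \neq 0$, and by continuity (or because a single point has no effect on the inverse transform) we conclude $L \equiv R$, which is \eqref{eq:condexp}.

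For the second assertion, I would simply specialize \eqref{eq:condexp}. Taking $X \sim g^{(\alpha)}_s$, the density $f_t$ of $(1-t)^{1/\alpha}X$ is $g^{(\alpha)}_{(1-t)s}$ by the scaling in Definition \ref{def:stabdens}, and $h_t = g^{(\alpha)}_{(1-t)s} \star g^{(\alpha)}_{ts} = g^{(\alpha)}_s$. Applying \eqref{eq:condexp} with $u = (1-t)s$ and $v = ts$ gives $\int g^{(\alpha)}_{(1-t)s}(x-y)\, y\, g^{(\alpha)}_{ts}(y)\, dy = \frac{ts\, x}{s}\, g^{(\alpha)}_s(x) = tx\, g^{(\alpha)}_s(x)$. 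Substituting into the defining formula \eqref{eq:scoredef} yields $\roo{X}{t}(x) = -\frac{tx\, g^{(\alpha)}_s(x)}{s\, t\, g^{(\alpha)}_s(x)} = -\frac{x}{s}$, and hence $\roo{X}{t}(x) + x/s = 0$, as claimed.

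The main obstacle is purely a matter of care rather than depth: justifying the manipulation ``multiplication by $x$ corresponds to $i\frac{d}{d\theta}$ of the characteristic function'' requires knowing that $y\, g^{(\alpha)}_v(y)$ is integrable, i.e. that $Z^{(\alpha)}_v$ has a finite first absolute moment, which holds only for $\alpha > 1$. For $\alpha \leq 1$ the integral $\int g^{(\alpha)}_u(x-y)\, g^{(\alpha)}_v(y)\, y\, dy$ still converges (by the symmetry $g^{(\alpha)}_v(y) = g^{(\alpha)}_v(-y)$ the tails cancel in a principal-value sense, and the heavy tail of $g^{(\alpha)}_u$ decays like $|y|^{-\alpha-1}$ so the product is integrable after the cancellation), but the characteristic-function argument needs to be run more carefully — for instance by writing $y\, g^{(\alpha)}_v(y)$ as a limit of truncations, or by differentiating under the integral sign in the convolution directly, $\frac{\partial}{\partial x} g^{(\alpha)}_{u+v}(x)$, using $\frac{d}{d\theta}$ of $\exp(-(u+v)|\theta|^\alpha)$ and noting this derivative is itself integrable for every $\alpha \in (0,2]$. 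I would handle the general case by this route: show $\int g^{(\alpha)}_u(x-y)\, g^{(\alpha)}_v(y)\, y\, dy = c\,\frac{\partial}{\partial x} g^{(\alpha)}_{u+v}(x)$ for the appropriate constant $c$ by matching Fourier transforms of these two integrable functions, then identify $\frac{\partial}{\partial x} g^{(\alpha)}_{u+v}(x)$ with $-\frac{x}{u+v}\,\frac{?}{}$ — no, more cleanly, just match $\wh{L}$ against $\wh{R}$ as above, being explicit that $R$ is integrable because $x\, g^{(\alpha)}_{u+v}(x)$ has the same tail $|x|^{-\alpha}$ which is integrable at infinity precisely when combined with... in fact for $\alpha \le 1$ one should instead verify integrability of $R$ via the identity $R = L$ itself once $L$ is shown integrable, closing the loop. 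The cleanest exposition may therefore restrict the detailed computation to $\alpha>1$ (the case of primary interest, where moment matching is used) and remark that the symmetric case extends by a truncation argument.
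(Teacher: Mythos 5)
Your proposal is correct and follows essentially the same route as the paper: match characteristic functions, noting that $y\,g^{(\alpha)}_v(y)$ transforms to (a factor of $i$ times) $\frac{d}{d\theta}\exp(-v|\theta|^{\alpha})$, multiply by $\exp(-u|\theta|^{\alpha})$, and recognise the product as $\frac{v}{u+v}$ times the transform of $x\,g^{(\alpha)}_{u+v}(x)$, then specialise $u=(1-t)s$, $v=ts$ to get $\roo{X}{t}(x)=-x/s$. Your closing discussion of integrability for $\alpha\le 1$ is a legitimate concern that the paper's own proof simply passes over, so the core argument is the same and the extra caveat only adds care, not a different method.
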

\begin{proof}
The $g^{(\alpha)}_u(y)$ has characteristic function $
\int g^{(\alpha)}_u(y) \exp(i \theta y) dy = \exp(-u |\theta|^{\alpha})$,
so that $y g^{(\alpha)}_v(y)$ has characteristic function $\frac{1}{i}
\frac{\partial}{\partial \theta} \exp(- v |\theta|^{\alpha}) =
\frac{\alpha v}{i} \theta^{\alpha-1} \exp(-v |\theta|^{\alpha})$ for 
$\theta \neq 0$.

Hence the convolution of $g_u^{(\alpha)}(y)$ and $g^{(\alpha)}_v(y) y$ has characteristic
function equal to the product of the two expressions, that is
$$ \exp(-u |\theta|^{\alpha})
\frac{\alpha v}{i} \theta^{\alpha-1} \exp(-v |\theta|^{\alpha})
= \frac{v}{u+v} 
\left( \frac{\alpha (u+v)}{i} \theta^{\alpha-1} \exp(-(u+v) |\theta|^{\alpha}) \right),$$
which we recognise as $v/(u+v)$ times the characteristic function of
$x g^{(\alpha)}_{u+v}(x)$.

The formula for the MMSE score for $X \sim g^{(\alpha)}_s$ follows, since then $f_t = g^{(\alpha)}_{s(1-t)}$ and $h_t = g^{(\alpha)}_s$, and 
so we can apply (\ref{eq:condexp}) with $u= s (1-t)$ and $v = s t$ to decide that the denominator of (\ref{eq:scoredef}) is
$t x g^{(\alpha)}_s(x)$, and so (\ref{eq:scoredef}) becomes $-x/s$.
\end{proof}
We can give some explicit examples, where Lemma \ref{lem:tech} can be understood without using  
characteristic functions:
\begin{example}
In the Gaussian case ($\alpha = 2$), completing the square and writing $\gamma = u v/(u+v)$, we see 
that the LHS of Equation (\ref{eq:condexp}) is
\begin{eqnarray*}
\lefteqn{ \frac{1}{\sqrt{ (2 \pi)^2 u v}} \int_{-\infty}^{\infty}
\exp \left( - \frac{(x-y)^2}{2u} - \frac{y^2}{2v} \right) y dy  } \\
& = & \frac{1}{\sqrt{2 \pi (u+v)}} \exp \left(- \frac{x^2}{2(u+v)} \right)
\int_{-\infty}^{\infty} \frac{1}{\sqrt{2 \pi \gamma}}
\exp \left( - \frac{(y - x \gamma/u)^2}{2 \gamma} \right)
\left[ (y - x \gamma/u) + x \gamma/u \right] dy \\
& = & g^{(\alpha)}_{u+v}(x)
\int_{-\infty}^{\infty} \frac{1}{\sqrt{2 \pi \gamma}}
\exp \left( - \frac{(y - x \gamma/u)^2}{2 \gamma} \right)
\left[ (y - x \gamma/u) + x \gamma/u \right] dy \\
& = & g^{(\alpha)}_{u+v}(x) \left( 0 + x \gamma/u \right),
\end{eqnarray*}
and the result follows.
\end{example}

\begin{example} In the Cauchy case ($\alpha = 1$), we can use a partial
fraction argument based on that
of \cite{blyth}. That is, for some $A,B,C,D$ (which are functions of $x$
but not $y$),
we know that the LHS  of Equation (\ref{eq:condexp})  is
\begin{eqnarray*}
\lefteqn{
\frac{1}{\pi^2} \int_{-\infty}^{\infty} \frac{u}{u^2 + (x-y)^2}
\frac{v y}{v^2 + y^2} dy } \\
& = & \frac{1}{\pi^2} \int_{-\infty}^{\infty} \frac{A(x-y) + B}{u^2 + (x-y)^2}
+ \frac{C y + D}{v^2 + y^2} dy \\
& = & \frac{1}{\pi} \left( \frac{B}{u} + \frac{D}{v} \right).
\end{eqnarray*}
Now, as in \cite{blyth}, equating coefficients of $y$ in the equation
$u v y = (A(x-y) + B)(v^2 + y^2) + (Cy+D)(u^2 + (x-y)^2)$ shows
that 
$$B/u + D/v = v x/((u+v)^2 + x^2),$$
and the result follows.
\end{example}

Observe that the standardized MMSE score 
$\roo{X}{t}(x) + x/s$ has mean zero in the case where $\alpha \geq 1$
\begin{eqnarray*}
\lefteqn{ 
 \int_{-\infty}^{\infty}h_t(x) \left( \roo{X}{t}(x) +  \frac{x}{s}  \right) dx } \\
& = & - \frac{(1-t)}{s t} \int_{-\infty}^{\infty}f_t(x-y) g^{(\alpha)}_{st}(y) y dy dx + \frac{1}{s}
\int_{-\infty}^{\infty}f_t(x-y) (x-y) g^{(\alpha)}_{st}(y) dy dx \\
& = & - \frac{(1-t) t^{1/\alpha}}{s t} 
\ep Z_s^{(\alpha)} + \frac{ (1-t)^{1/\alpha}}{s} \ep X.\end{eqnarray*}
Hence if $\alpha > 1$ then by assumption $\ep Z_s^{(\alpha)} = \ep X = 0$, and
the mean is zero. If $\alpha = 1$, this becomes $(1-t)  \ep(Z^{(\alpha)}_s-X)$,
which can be assumed to be zero by (pseudo)-moment matching.

\section{Partial differential equation for $h_t$} \label{sec:pde}

The next result we prove is a partial differential equation in terms of
$t$ and $x$, involving the standardized MMSE score. It can be seen that this is a generalization of the heat equation,
Theorem \ref{thm:heateqn}:
\begin{theorem} \label{thm:pde}
The density $h_t$ of $X_t = (1-t)^{1/\alpha} X + 
t^{1/\alpha} Z^{(\alpha)}_s$ satisfies
\begin{equation} \label{eq:pde} \frac{\partial h_t}{\partial t} (x)
= \frac{s}{\alpha (1-t) } \frac{\partial}{\partial x} 
\biggl( h_t(x) \left( \roo{X}{t}(x) + \frac{ x}{s} \right) \biggr), \end{equation}
where $\roo{X}{t}$ is the MMSE score of Definition \ref{def:score}.
\end{theorem}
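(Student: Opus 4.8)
The plan is to use the fact that $h_t = f_t \star g^{(\alpha)}_{st}$ is a convolution of two densities, each depending on $t$ \emph{only through a scale parameter}, so that $\partial h_t/\partial t$ reduces to a single elementary scaling identity. That identity is: if $p_a(z) = a^{-1}p(z/a)$ is the dilation by $a>0$ of a fixed density $p$, then differentiating directly gives
$$\frac{\partial p_a}{\partial a}(z) = -\frac{1}{a}\,\frac{\partial}{\partial z}\bigl(z\,p_a(z)\bigr).$$
Applying this with $a=(1-t)^{1/\alpha}$ to $f_t$ (the density of $(1-t)^{1/\alpha}X$) and with $a=(st)^{1/\alpha}$ to $g^{(\alpha)}_{st}$, and using the chain rule in $t$, I would first record
$$\frac{\partial f_t}{\partial t}(z) = \frac{1}{\alpha(1-t)}\,\frac{\partial}{\partial z}\bigl(z f_t(z)\bigr), \qquad \frac{\partial g^{(\alpha)}_{st}}{\partial t}(y) = -\frac{1}{\alpha t}\,\frac{\partial}{\partial y}\bigl(y g^{(\alpha)}_{st}(y)\bigr).$$

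Next I would differentiate $h_t(x)=\int f_t(x-y)g^{(\alpha)}_{st}(y)\,dy$ under the integral sign and split into the contribution $A$ from $f_t$ and the contribution $B$ from $g^{(\alpha)}_{st}$. Write $K_t(x):=\int f_t(x-y)\,y\,g^{(\alpha)}_{st}(y)\,dy$, which by Definition \ref{def:score} is exactly $-st\,h_t(x)\roo{X}{t}(x)$. The first scaling identity turns $A$ into $\tfrac{1}{\alpha(1-t)}\,\tfrac{\partial}{\partial x}\bigl(x h_t(x)-K_t(x)\bigr)$, because $\int \bigl[\partial_z(z f_t)\bigr]_{z=x-y}\,g^{(\alpha)}_{st}(y)\,dy=\tfrac{\partial}{\partial x}\!\int (x-y) f_t(x-y) g^{(\alpha)}_{st}(y)\,dy$ and $\int(x-y)f_t(x-y)g^{(\alpha)}_{st}(y)\,dy = x h_t(x)-K_t(x)$. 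For $B$, the second identity gives $B=-\tfrac{1}{\alpha t}\int f_t(x-y)\,\partial_y\bigl(y g^{(\alpha)}_{st}(y)\bigr)\,dy$; integrating by parts in $y$ (the boundary terms vanish because $y g^{(\alpha)}_{st}(y)\to0$ from the $|y|^{-1-\alpha}$ tails of the stable density while $f_t$ is bounded) and using $\partial_y f_t(x-y)=-\partial_x f_t(x-y)$ gives $B=-\tfrac{1}{\alpha t}\,\partial_x K_t(x)$. Adding, and using $\tfrac{1}{1-t}+\tfrac1t=\tfrac{1}{t(1-t)}$, yields $\partial_t h_t=\tfrac{1}{\alpha(1-t)}\partial_x(x h_t)-\tfrac{1}{\alpha t(1-t)}\partial_x K_t$; substituting $K_t=-st\,h_t\roo{X}{t}$ and pulling out $\tfrac{s}{\alpha(1-t)}$ and $\partial_x$ gives precisely (\ref{eq:pde}).

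The algebra above is essentially the whole content; the one genuine obstacle is analytic rigour — justifying differentiation under the $t$-integral and the integration by parts in $y$, which needs mild integrability/smoothness of $f$ together with the standard power-law tail estimates for symmetric stable densities (see \cite{samorodnitsky,zolotarev3}). If one prefers to bypass regularity questions, the identity can instead be verified on the Fourier side, in the spirit of the proof of Lemma \ref{lem:tech}: writing $\psi(\theta)=\ep e^{i\theta X}$ for the characteristic function of $X$, one has $\wh{h_t}(\theta)=\psi((1-t)^{1/\alpha}\theta)\exp(-st|\theta|^\alpha)$, and a short computation using $\widehat{y g^{(\alpha)}_{st}}(\theta)=i\,\alpha s t\,|\theta|^{\alpha-1}\mathrm{sgn}(\theta)\exp(-st|\theta|^\alpha)$ and $\widehat{x h_t}(\theta)=-i\,\partial_\theta\wh{h_t}(\theta)$ shows that both sides of (\ref{eq:pde}) have Fourier transform equal to $-s|\theta|^\alpha\wh{h_t}(\theta)-\tfrac{\theta}{\alpha}(1-t)^{1/\alpha-1}\psi'((1-t)^{1/\alpha}\theta)\exp(-st|\theta|^\alpha)$; the apparent singularity of $|\theta|^{\alpha-1}$ at $\theta=0$ is harmless because $\alpha>0$ makes it locally integrable. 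I expect this tail/regularity step to be the main, if routine, hurdle.
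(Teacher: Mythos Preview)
Your argument is correct and follows essentially the same route as the paper: both proofs derive the two scaling PDEs $\partial_t f_t = \tfrac{1}{\alpha(1-t)}\partial_z(z f_t)$ and $\partial_t g^{(\alpha)}_{st} = -\tfrac{1}{\alpha t}\partial_y(y g^{(\alpha)}_{st})$, differentiate the convolution, integrate by parts in $y$ using $\partial_y f_t(x-y)=-\partial_x f_t(x-y)$, and combine the two pieces via $\tfrac{1}{1-t}+\tfrac{1}{t}=\tfrac{1}{t(1-t)}$ before substituting the definition of $\roo{X}{t}$. Your introduction of $K_t$ and the optional Fourier-side check are cosmetic additions, not a different strategy.
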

\begin{proof}
The key is to observe that each of $f_t$ and $g_{st}^{(\alpha)}$ 
satisfy partial differential equations which can be combined
together. Specifically, since $f_t(z) = f(z/(1-t)^{1/\alpha})/(1-t)^{1/\alpha}$,
we know that
\begin{eqnarray*}
 \frac{\partial f_t}{\partial t} (x-y)
& =  & \frac{1}{\alpha (1-t)} \left( f_t(x-y) + (x-y) \frac{\partial
f_t}{\partial x} (x-y) \right) \nonumber \\
& =  & \frac{1}{\alpha (1-t)} \left( f_t(x-y) + x \frac{\partial
f_t}{\partial x} (x-y) \right) - \frac{1}{\alpha (1-t)} y \frac{\partial
f_t}{\partial x} (x-y) 
\end{eqnarray*}
Now, multiplying by $g^{(\alpha)}_{st}(y)$ and integrating, we obtain
\begin{eqnarray}
\lefteqn{
\int_{-\infty}^{\infty} \frac{\partial f_t}{\partial t} (x-y) g^{(\alpha)}_{st}(y) dy } \nonumber \\
& = & \frac{1}{\alpha (1-t)} \left( h_t(x) + x \frac{\partial
h_t}{\partial x} 
(x) \right) -  \frac{1}{\alpha (1-t)} \int_{-\infty}^{\infty} y \frac{\partial
f_t}{\partial x} (x-y) g^{(\alpha)}_{st}(y) dy \nonumber \\
& = & \frac{1}{\alpha (1-t)}  \frac{\partial}{\partial x} (x h_t(x))  -  \frac{1}{\alpha (1-t)} \int_{-\infty}^{\infty} y \frac{\partial
f_t}{\partial x} (x-y) g^{(\alpha)}_{st}(y) dy. \label{eq:fpde} \end{eqnarray}

Similarly,  $g^{(\alpha)}_{st}(z) = g^{(\alpha)}_s(z/t^{1/\alpha})/t^{1/\alpha}$, so that
\begin{eqnarray}
 \frac{\partial g^{(\alpha)}_{st}}{\partial t} (y)
= - \frac{1}{\alpha t} \left( g^{(\alpha)}_{st}(y) + y \frac{\partial g^{(\alpha)}_{st}}{\partial y} 
(y) \right)
= - \frac{1}{\alpha t} \frac{\partial}{\partial y} \left( y g^{(\alpha)}_{st}(y) \right). \label{eq:gpde} \end{eqnarray}
This means that, using integration by parts, and the fact that
$\frac{\partial f_t}{\partial y} (x-y) = 
- \frac{\partial f_t}{\partial x} (x-y)$, the term
\begin{eqnarray}
\int_{-\infty}^{\infty} 
f_t(x-y) \frac{\partial  g^{(\alpha)}_{st}}{\partial t}(y) & =& 
\frac{1}{\alpha t}  \int_{-\infty}^{\infty} \frac{\partial f_t}{\partial y}
(x-y) y g^{(\alpha)}_{st}(y)
dy \nonumber \\
 & =& - 
\frac{1}{\alpha t}  \left( \int_{-\infty}^{\infty} y \frac{\partial f_t}{\partial x}
(x-y)  g^{(\alpha)}_{st}(y) dy \right). \label{eq:gpde2}
\end{eqnarray}
Adding Equations (\ref{eq:fpde}) and (\ref{eq:gpde2}) we obtain
\begin{eqnarray*}
\frac{\partial}{\partial t} h_t(x) & = & 
\frac{1}{\alpha (1-t)}  \frac{\partial}{\partial x} (x h_t(x))
- \frac{1}{\alpha t(1-t)}  \left( \int_{-\infty}^{\infty} y \frac{\partial f_t}{\partial x}
(x-y)  g^{(\alpha)}_{st}(y) dy \right) \nonumber \\
& = & \frac{1}{\alpha (1-t)}  \frac{\partial}{\partial x} (x h_t(x))
+ \frac{1}{\alpha (1-t)}  \frac{\partial}{\partial x} \left( -\frac{1}{t} \int_{-\infty}^{\infty} y  f_t(x-y)  g^{(\alpha)}_{st}(y) dy \right) \\
& = & \frac{1}{\alpha (1-t)}  \frac{\partial}{\partial x} (x h_t(x))
+ \frac{1}{\alpha (1-t)}  \frac{\partial}{\partial x} \left( s h_t(x) \roo{X}{t}(x)  \right)
\end{eqnarray*}
and the result follows.
\end{proof}
Note that Equation (\ref{eq:pde}) takes a particularly simple form, with low degree. In \cite[Section 5.3]{johnson14},  results of 
Medgyessy \cite{medgyessy2,medgyessy} were reviewed,
implying that certain stable densities satisfy PDEs. However, it should be noted
that the form of the PDEs depends on the form of any rational representation of $\alpha = m/n$, and in general such PDEs can have
arbitrarily large degree, making them unhelpful to derive de Bruijn identities such as Theorem \ref{thm:debruijn}.

\section{Derivatives of information--theoretic quantities} \label{sec:ders}
\subsection{de Bruijn identity for stable random variables} \label{sec:stabdeb}

Using this PDE, Equation (\ref{eq:pde}),  we can consider derivatives
of the  relative entropy $D( h_t \| g^{(\alpha)})$, entropy $H(h_t)$ and 
 energy functional $\Lambda_s^{(\alpha)}(X_t) 
:= -\ep \log g_s^{(\alpha)}(X_t)$.  Recall we write  $\roof{f}(x)  = f'(x)/f(x)$
for the Fisher score of density $f$.

We now prove Theorem \ref{thm:pdequant}.\ref{eq:pderelent} which generalizes the de Bruijn identity (Theorem
\ref{thm:debruijn}).
We view the RHS of Equation (\ref{eq:relentder}) as an inner product of two types of standardized score, 
firstly the standardized MMSE score $\roo{X}{t}(x) + x/s$ introduced in Definition \ref{def:score}, and secondly the standardized
 Fisher score  $\roof{h_t}(x) - \roof{g^{(\alpha)}_s}(x)$.
Example \ref{ex:scoreGaussian} shows that in the Gaussian case $\alpha =2$, these two scores coincide, and we 
recover the familiar standardized Fisher information $J(h_t)$ as $s$ times the expectation of a perfect square, proving
Theorem \ref{thm:debruijn}.

In general, we might hope to control the inner product in (\ref{eq:relentder}) using Cauchy-Schwarz, since we expect
that both terms will be close to zero when $f$ is close to $g_s^{(\alpha)}$.
\begin{theorem} \label{thm:pdequant} Consider $h_t$ the density of $X_t = (1-t)^{1/\alpha} X + 
t^{1/\alpha} Z^{(\alpha)}_s$.
\begin{enumerate}
\item \label{eq:pderelent} The derivative of the relative 
entropy is 
\begin{eqnarray}
  \frac{\partial}{\partial t} D(h_t \| g^{(\alpha)}_s)  
& = & - \frac{s}{\alpha (1-t)} \int_{-\infty}^{\infty}h_t(x) \left( \roo{X}{t}(x) + \frac{ x}{s}  \right)
\left(  \roof{h_t}(x) - \roof{g^{(\alpha)}}(x) \right) dx.\;\;\;\;\;\;
\label{eq:relentder} \end{eqnarray}
\item \label{eq:pdeent} The derivative of the entropy itself
\begin{equation} \label{eq:entder}
 \frac{\partial}{\partial t} H(h_t) = 
\frac{s}{\alpha (1-t)} \int_{-\infty}^{\infty}h_t(x) \left( \roo{X}{t}(x) +  \frac{ x}{s}  \right)
\left(  \roof{h_t}(x) \right) dx.\end{equation}
\item \label{eq:pdeener} $\Lambda(X_t)=  - \int_{-\infty}^{\infty}  h_t(x)
\log g_1(x) dx $ has derivative equal to
\begin{eqnarray}
 \frac{\partial}{\partial t} \Lambda_s^{(\alpha)}(X_t)
& = & \frac{1}{\alpha (1-t) } \int_{-\infty}^{\infty}
 h_t(x) (\roo{X}{t}(x) +  \frac{ x}{s}  ) \left( \roof{g^{(\alpha)}_s}(x) \right) dx. 
\label{eq:todeal}
\end{eqnarray}
\end{enumerate}
\end{theorem}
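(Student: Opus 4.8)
The plan is to derive all three identities from the PDE of Theorem~\ref{thm:pde} by differentiating under the integral sign and then integrating by parts in $x$. For part~\ref{eq:pderelent}, I would write $D(h_t \| g^{(\alpha)}_s) = \int h_t(x) \log h_t(x)\, dx - \int h_t(x) \log g^{(\alpha)}_s(x)\, dx$ and differentiate term by term. The derivative of the first term is $\int (\partial_t h_t)(\log h_t + 1)\, dx$, and since $\int \partial_t h_t\, dx = \partial_t \int h_t\, dx = 0$, this reduces to $\int (\partial_t h_t) \log h_t\, dx$. The derivative of the second term is $-\int (\partial_t h_t) \log g^{(\alpha)}_s\, dx$. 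Adding them gives $\partial_t D(h_t \| g^{(\alpha)}_s) = \int (\partial_t h_t)(x) \log\bigl(h_t(x)/g^{(\alpha)}_s(x)\bigr)\, dx$. Now substitute the PDE~(\ref{eq:pde}) for $\partial_t h_t$ and integrate by parts, moving the $\partial/\partial x$ off the bracket $h_t(x)(\roo{X}{t}(x) + x/s)$ and onto $\log(h_t/g^{(\alpha)}_s)$; since $\partial_x \log(h_t(x)/g^{(\alpha)}_s(x)) = \roof{h_t}(x) - \roof{g^{(\alpha)}_s}(x)$, this yields exactly~(\ref{eq:relentder}), up to the sign produced by integration by parts and the boundary terms vanishing.

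Parts~\ref{eq:pdeent} and~\ref{eq:pdeener} are entirely parallel and in fact follow from the same computation applied to the two pieces separately. For the entropy, $\partial_t H(h_t) = -\partial_t \int h_t \log h_t\, dx = -\int (\partial_t h_t) \log h_t\, dx$ (again using $\int \partial_t h_t\, dx = 0$ to kill the ``$+1$''); inserting the PDE and integrating by parts with $\partial_x \log h_t(x) = \roof{h_t}(x)$ gives~(\ref{eq:entder}). For the energy functional, $\partial_t \Lambda_s^{(\alpha)}(X_t) = -\int (\partial_t h_t) \log g^{(\alpha)}_s\, dx$; inserting the PDE and integrating by parts with $\partial_x \log g^{(\alpha)}_s(x) = \roof{g^{(\alpha)}_s}(x)$ gives~(\ref{eq:todeal}). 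Note also that~(\ref{eq:relentder})~$=$~(\ref{eq:todeal})~$-$~(\ref{eq:entder}) as a consistency check, reflecting $D(h_t\|g^{(\alpha)}_s) = \Lambda_s^{(\alpha)}(X_t) - H(h_t)$; I would mention this to avoid redundant verification.

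The main obstacle is not the formal manipulation but its justification: interchanging $\partial_t$ with the integral, and controlling the boundary terms at $x \to \pm\infty$ in each integration by parts. One needs that $h_t(x)(\roo{X}{t}(x) + x/s)\log(h_t(x)/g^{(\alpha)}_s(x)) \to 0$ as $|x| \to \infty$ (and analogously with $\log h_t$ and $\log g^{(\alpha)}_s$ in place of the log-ratio), together with enough integrability of $\partial_t h_t$ and of $h_t \log h_t$ uniformly in $t$ on compact subintervals to apply dominated convergence. Since stable densities have only power-law tails (indeed $g^{(\alpha)}$ decays like $|x|^{-\alpha-1}$ for $\alpha<2$), $\log g^{(\alpha)}_s(x)$ grows only logarithmically, so these boundary and integrability conditions are plausible but do require the tail estimates for stable densities from \cite{samorodnitsky,zolotarev3}; I would either impose mild regularity hypotheses on $f$ (e.g. that the relevant entropies and the energy functional are finite and differentiable) or invoke those known tail bounds, and state explicitly that under such conditions the formal calculation is rigorous. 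The algebraic heart of the argument — rewriting $\partial_t h_t$ via~(\ref{eq:pde}) and reading off $\roof{h_t} - \roof{g^{(\alpha)}_s}$ after one integration by parts — is short and is the content I would present in full.
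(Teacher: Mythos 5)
Your proposal is correct and follows essentially the same route as the paper: differentiate under the integral (the ``$+1$'' term vanishing since $\int \partial_t h_t\,dx = 0$), substitute the PDE of Theorem~\ref{thm:pde}, and integrate by parts so that $\partial_x \log(h_t/g^{(\alpha)}_s)$ produces the standardized Fisher score, with parts~\ref{eq:pdeent} and~\ref{eq:pdeener} handled by the same computation applied to each piece. The paper likewise only sketches the regularity issues (``assuming all functions are well-behaved at infinity''), so your extra attention to boundary terms and tail bounds is a refinement rather than a divergence.
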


\begin{proof}
We give the argument for the derivative of the relative entropy -- in other cases, a similar argument will work.
Since $g^{(\alpha)}_s(x)$ remains constant in $t$, we  write
\begin{eqnarray}
 \lefteqn{ \frac{\partial}{\partial t} D(h_t \| g^{(\alpha)}_s) } \nonumber \\
& = &    \int_{-\infty}^{\infty}\frac{\partial h_t}{\partial t}(x) \log \frac{ h_t(x)}{
g^{(\alpha)}_s(x)} dx +
\int_{-\infty}^{\infty}h_t(x)  \left( \frac{\partial h_t}{\partial t}(x) \frac{1}{h_t(x)} \right) dx \label{eq:step1} \\
& = &    \int_{-\infty}^{\infty}\frac{s}{\alpha (1-t)} \frac{\partial}{\partial x} \left( h_t(x) \left( \roo{X}{t}(x) + \frac{ x}{s}  \right) \right)
 \log \frac{ h_t(x)}{ g^{(\alpha)}_s(x)} dx \label{eq:step2}  \\
& = & - \frac{s}{\alpha (1-t)} \int_{-\infty}^{\infty}h_t(x) \left( \roo{X}{t}(x) +  \frac{ x}{s}   \right)
\left( \frac{\partial h_t}{\partial x}(x) \frac{ 1}{h_t(x)}
- \frac{\partial g^{(\alpha)}_s}{\partial x}(x) \frac{ 1}{g^{(\alpha)}_s(x)} \right) dx. \label{eq:step3}
 \end{eqnarray}
Here, the second term in (\ref{eq:step1}) simplifies to give zero, so that (\ref{eq:step2}) follows using the generalized heat
equation Theorem \ref{thm:pde}, and (\ref{eq:step3}) follows using integration by parts, assuming all functions are well-behaved at infinity.
\end{proof}
We can use similar arguments to give expressions for derivatives of other entropy-like functionals (such as R\'{e}nyi or Tsallis entropies),
since for any function  $\Theta$, the derivative 
\begin{equation} \frac{d}{dt}
 \int_{-\infty}^{\infty} \Theta( h_t(x) ) dx =  - \frac{s}{\alpha (1-t)}  \int_{-\infty}^{\infty} \Theta''(h_t(x)) h_t(x)^2
\left( \roo{X}{t}(x) +  \frac{ x}{s}  \right) \left(  \roof{h_t}(x) \right) dx
\end{equation}
is an inner product with respect to a non-standard weighting.
\subsection{Derivative of mutual information} \label{sec:mutinf}
We can reproduce and extend the steps concerning the mutual information
considered by Guo, Shamai and Verd\'{u} \cite{guo}. First we make explicit the link between our MMSE score
and estimation. Recall that we write $X_t = (1-t)^{1/\alpha} X + t^{1/\alpha} Z^{(\alpha)}_s$.
\begin{remark} \label{rem:diffmmse}
Notice that  we can rephrase Equation (\ref{eq:scoredef}) to read 
 $s t \roo{X}{t}(w) = - t^{1/\alpha} \widehat{Z}(w)$,
where $\widehat{Z}(w) = \ep(Z^{(\alpha)}_s | X_t = w)$. Analogously defining
$\widehat{X}(w) = \ep(X | X_t = w)$, we obtain that
\begin{equation} \label{eq:scoremse}
 (1-t)^{1/\alpha} \widehat{X}(w) = \ep( (1-t)^{1/\alpha} X | X_t =w)
= \ep(X_t - t^{1/\alpha} Z | X_t = w) = w + s t \roo{X}{t}(w). \end{equation} 
This confirms the obvious fact that writing $\wh{X}$ for $\wh{X}(X_t)$ and $\wh{Z}$ for
$\wh{Z}(X_t)$ we obtain
\begin{equation} \label{eq:scoremse2} (1-t)^{1/\alpha} (X - \wh{X}) = (1-t)^{1/\alpha} X - s t
\roo{X}{t}(X_t) - X_t = t^{1/\alpha} (\wh{Z} - Z^{(\alpha)}_s), \end{equation}
so MMSE on estimating $X$ and $Z^{(\alpha)}_s$ agree up to a known factor.
\end{remark}
This allows us to prove the following result, which can be 
seen as a generalization of Theorem \ref{thm:mmse}, resulting in
an inner product representation similar to  Theorem \ref{thm:pdequant}.\ref{eq:pderelent} (though note that the resulting score functions are not precisely the standardized ones).
\begin{theorem} \label{thm:verdu}
For any random variable $X$,
\begin{equation} \label{eq:channel}
\frac{\partial}{\partial t} I(X; (1-t)^{1/\alpha} X+ t^{1/\alpha} Z^{(\alpha)}_s)
= \frac{s}{\alpha(1-t)}  \int_{-\infty}^{\infty} h_t(x) \left( \roo{X}{t}(x) + \frac{x}{s t} \right) \left( \roof{h_t}(x)
 \right) dx. \end{equation}
\end{theorem}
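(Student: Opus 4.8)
The plan is to reduce \eqref{eq:channel} to the entropy-derivative formula already proved in Theorem \ref{thm:pdequant}.\ref{eq:pdeent}, by writing the mutual information as a difference of differential entropies and evaluating the conditional entropy by hand. First I would write
\[ I(X; X_t) = H(X_t) - H(X_t \mid X) = H(h_t) - H(X_t \mid X). \]
Conditioned on $X = x$, the variable $X_t$ equals $(1-t)^{1/\alpha} x + t^{1/\alpha} Z^{(\alpha)}_s$, i.e. a deterministic translate of $t^{1/\alpha} Z^{(\alpha)}_s$. Since differential entropy is translation invariant and obeys the scaling rule $H(cW) = H(W) + \log c$ for $c > 0$, we get $H(X_t \mid X = x) = H\bigl(t^{1/\alpha} Z^{(\alpha)}_s\bigr) = H\bigl(Z^{(\alpha)}_s\bigr) + \tfrac{1}{\alpha}\log t$, which does not depend on $x$. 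Hence $I(X; X_t) = H(h_t) - H\bigl(Z^{(\alpha)}_s\bigr) - \tfrac{1}{\alpha}\log t$, and differentiating in $t$ removes the constant term:
\[ \frac{\partial}{\partial t} I(X; X_t) = \frac{\partial}{\partial t} H(h_t) - \frac{1}{\alpha t}. \]

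Next I would substitute the entropy derivative from Theorem \ref{thm:pdequant}.\ref{eq:pdeent}, obtaining
\[ \frac{\partial}{\partial t} I(X; X_t) = \frac{s}{\alpha(1-t)} \int_{-\infty}^{\infty} h_t(x)\left(\roo{X}{t}(x) + \frac{x}{s}\right)\roof{h_t}(x)\,dx - \frac{1}{\alpha t}. \]
It then remains only to absorb the $-1/(\alpha t)$ term into the integral, converting the coefficient $x/s$ into the claimed $x/(st)$. Since $\tfrac{x}{st} - \tfrac{x}{s} = \tfrac{x}{s}\cdot\tfrac{1-t}{t}$, the corresponding correction term is $\tfrac{s}{\alpha(1-t)}\int h_t(x)\bigl(\tfrac{x}{st}-\tfrac{x}{s}\bigr)\roof{h_t}(x)\,dx = \tfrac{1}{\alpha t}\int x\,\roof{h_t}(x) h_t(x)\,dx = \tfrac{1}{\alpha t}\int x\, h_t'(x)\,dx$, and integration by parts gives $\int x\, h_t'(x)\,dx = -\int h_t(x)\,dx = -1$, using $x h_t(x)\to 0$ at $\pm\infty$. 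So this correction equals exactly $-\tfrac{1}{\alpha t}$, which matches the stray term above and yields \eqref{eq:channel}.

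I expect the main obstacle to be regularity bookkeeping rather than anything structural: one needs $I(X; X_t)$ and the entropies $H(h_t)$, $H\bigl(Z^{(\alpha)}_s\bigr)$ to be finite (the stable density is heavy-tailed when $\alpha < 2$, so existence of $H\bigl(Z^{(\alpha)}_s\bigr)$ should be noted), one must justify the hypotheses under which Theorem \ref{thm:pdequant}.\ref{eq:pdeent} applies, and one needs the decay $x h_t(x)\to 0$ that legitimizes the final integration by parts --- the same "well-behaved at infinity" conditions already assumed in the proof of Theorem \ref{thm:pdequant}. A purely cosmetic alternative would be to derive \eqref{eq:channel} directly in the Guo--Shamai--Verd\'u manner from the MMSE identity of Remark \ref{rem:diffmmse}, expressing the derivative of $I(X;X_t)$ via the estimation error $\ep(X-\wh{X})^2$; but routing through the already-established entropy derivative is the shortest path.
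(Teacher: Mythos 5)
Your proposal is correct and follows essentially the same route as the paper: decompose $I(X;X_t) = H(h_t) - H(t^{1/\alpha} Z^{(\alpha)}_s) = H(h_t) - \tfrac{1}{\alpha}\log t - H(Z^{(\alpha)}_s)$, differentiate using Theorem \ref{thm:pdequant}.\ref{eq:pdeent}, and absorb the $-1/(\alpha t)$ term into the integral via $\int x\,\frac{\partial h_t}{\partial x}(x)\,dx = -1$. The paper's proof is exactly this computation, including the same integration-by-parts step and the same regularity caveats.
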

\begin{proof}
We expand
\begin{eqnarray*}
\lefteqn{ I(X; (1-t)^{1/\alpha} X+ t^{1/\alpha} Z^{(\alpha)}_s)  } \\
& = & H ( (1-t)^{1/\alpha} X+ t^{1/\alpha} Z^{(\alpha)}_s) - 
H( (1-t)^{1/\alpha} X+ t^{1/\alpha} Z^{(\alpha)}_s | X) \\
& = & H(h_t) - H(t^{1/\alpha} Z^{(\alpha)}_s ) \\
& = & H(h_t) - \frac{\log t}{\alpha} - H(Z^{(\alpha)}_s ).
\end{eqnarray*}
Hence, differentiating, and using Theorem \ref{thm:pde}
and Equation (\ref{eq:entder}), we obtain
\begin{eqnarray*}
\lefteqn{
\frac{\partial}{\partial t} I(X; (1-t)^{1/\alpha} X+ t^{1/\alpha} Z^{(\alpha)}_s) } \\
& = & \frac{s}{\alpha (1-t)} \int_{-\infty}^{\infty} h_t(x) \left( \roo{X}{t}(x) +  \frac{x}{s}  \right)
\left( \roof{h_t}(x)  \right) dx
- \frac{1}{\alpha t} \\
& = & \frac{s}{\alpha (1-t)} \int_{-\infty}^{\infty}  h_t(x) \ \left( \roo{X}{t}(x) +  \frac{x}{s} +  \frac{x (1-t)}{s t} \right) 
\left( \roof{h_t}(x)  \right) dx 
\end{eqnarray*}
since integration by parts means that $\int_{-\infty}^{\infty}x \frac{\partial h_t}{\partial x}(x)  dx =
- \int_{-\infty}^{\infty}h_t(x) dx = -1$.
\end{proof}

In the Gaussian case $\alpha = 2$, $s=1$,
we recover the  MMSE characterization of \cite{guo} (Theorem \ref{thm:mmse})  on observing that 
in this case, Example \ref{ex:scoreGaussian} gives
\begin{equation} \label{eq:verdugauss} \roof{h_t}(y) = \roo{X}{t}(y) =  \frac{ \sqrt{1-t} \wh{X} - y}{t}, \end{equation}
where the second identity follows from Equation (\ref{eq:scoremse}). Combining  (\ref{eq:verdugauss}) with
(\ref{eq:scoremse}), 
we  write the RHS of Equation (\ref{eq:channel}) as
\begin{eqnarray}
\frac{1}{2(1-t)} \ep \left[ \frac{\sqrt{1-t} \wh{X}}{t} \left( \frac{ \sqrt{1-t} \wh{X} - X_t}{t} \right) \right]
& = & \frac{1}{2t^2} \ep( \wh{X}^2 - 1)  \label{eq:project} \\
& = & -\frac{1}{2t^2} \mmse(X|X_t). \label{eq:project2}
\end{eqnarray}
Here (\ref{eq:project}) uses the fact that $\wh{X} = \ep( X | X_t)$, so we know that $\ep [\wh{X} X_t] = \ep[ \ep( X | X_t) X_t]
= \ep[ \ep(X X_t) | X_t] = \ep(X X_t) = \sqrt{1-t}$.
Equation (\ref{eq:project2}) means that we can deduce
\begin{eqnarray*}
\frac{\partial}{\partial t} I(X; \sqrt{1-t} X+ \sqrt{t} Z_1)
& = & -\frac{1}{2 t^2} \mmse(X| X_t)).
\end{eqnarray*}
We recover the exact form of Theorem \ref{thm:mmse} by a change of variable
argument, noting that the channel in \cite{guo} uses $\snr = (1-t)/t$, so that
$t = 1/(\snr +1)$, and $\frac{\partial t}{\partial \snr} = - 1/(\snr +1)^2
= -t^2$. This means that
$$ \frac{\partial I}{\partial \snr} = 
\frac{\partial I}{\partial t} \frac{\partial t}{\partial \snr} 
= \frac{\mmse(X | \sqrt{\snr} X + Z)}{2},$$
and we recover Theorem \ref{thm:mmse}.
\section{Maximum entropy properties} \label{sec:maxent}
We now briefly discuss maximum entropy results for stable densities. Recall that
the Gaussian is maximum entropy in  a class (random variables with given variance) which is defined by tail (moment) behaviour alone,
and which coincides with the domain of normal attraction, in the sense of Definition \ref{def:dna}. We show that the position is more complicated for more general stable laws.
In particular, in Lemma \ref{lem:notdoa} we show that there is no maximum entropy characterization of all stable laws
within their domain of normal attraction, of the kind that Corollary \ref{cor:maxentdna} establishes for the Gaussian 
density. 

This suggests that extra conditions are necessary, in a way that is  reminiscent of the case of Poisson variables (see \cite{johnson21}). It is well
known that the Poisson is not maximum entropy on fixing the mean, since the geometric has larger entropy. However
in \cite{johnson21}, it was proved that Poisson random variables (with probability mass functions $\Pi_\lambda$)  are maximum entropy within the class of 
variables with fixed mean and mass function $f$ such that $f/\Pi_\lambda$ is (integer) log-concave. In other words, we require both a tail condition and
pointwise control.

In this paper we do not give a direct definition of a class within which stable laws are  maximum entropy . However,
Lemma \ref{lem:maxent} gives a condition which needs to be verified for all $t$. We hope to simplify this condition in future work.
For example, in a similar spirit to \cite{johnson21}, we might hope that stable densities $g^{(\alpha)}$  are maximum entropy
in the class of random variables with density $f$ such that $f/g^{(\alpha)}$ is log-concave.
\begin{lemma} \label{lem:notdoa}
Given any $\alpha$-stable random variable $Z^{(\alpha)}_s$, 
if $\alpha < 2$
there exists $X$ such that
\begin{enumerate}
\item $H(X) > H(Z^{(\alpha)}_s )$ ,
\item $X$ lies in in the domain of normal attraction of $Z^{(\alpha)}_s$.
\end{enumerate}
\end{lemma}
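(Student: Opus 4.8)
The plan is to construct $X$ explicitly as a small perturbation of $Z^{(\alpha)}_s$ that keeps us inside the domain of normal attraction while strictly increasing entropy. The guiding principle is that $Z^{(\alpha)}_s$ is \emph{not} the maximum-entropy density subject to fixing only those tail features that determine membership in its domain of normal attraction (for $\alpha<2$, unlike the Gaussian case, the relevant tail behaviour is a power-law tail $\sim c|x|^{-\alpha-1}$ rather than a finite variance, and there is plenty of room to redistribute mass in the bulk). Concretely, I would first recall the Gnedenko--Kolmogorov criterion (\cite[Chapter 7]{gnedenko}): for the normalization in Definition \ref{def:dna} with $A_n$ and $a$ as given, $X$ lies in the domain of normal attraction of $Z^{(\alpha)}_s$ if and only if the tails of $X$ satisfy $\pr(X > x) \sim C_+ x^{-\alpha}$ and $\pr(X < -x) \sim C_- x^{-\alpha}$ as $x \to \infty$ with the correct constants $C_\pm$ matched to the scale $s$ and (for $\alpha<2$) no slowly varying correction permitted — i.e.\ it is a genuinely asymptotic, tail-only condition. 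The key observation is that $g^{(\alpha)}_s$ itself satisfies exactly these tail asymptotics (see \cite{samorodnitsky,zolotarev3}), and any density agreeing with $g^{(\alpha)}_s$ outside a compact set $[-M,M]$ automatically inherits them.

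The second step is the perturbation. Fix $M$ large and let $X$ have a density $f$ that equals $g^{(\alpha)}_s$ on $\{|x| > M\}$ and, on $[-M,M]$, is a normalized modification of $g^{(\alpha)}_s$ chosen to have strictly larger entropy while carrying the same total mass on $[-M,M]$. The cleanest choice: among all densities on $[-M,M]$ with a prescribed mass $p = \int_{-M}^{M} g^{(\alpha)}_s$, entropy on that interval is maximized by the (truncated, renormalized) uniform density, not by the truncated $g^{(\alpha)}_s$, since $g^{(\alpha)}_s$ restricted to $[-M,M]$ is non-constant. (If one wants to preserve the mean — relevant only when $\alpha > 1$, but the lemma allows any such $X$ so this is optional — replace ``uniform'' by the symmetric density on $[-M,M]$ of given mass maximizing entropy, which is again explicitly the flat one by symmetry.) Gluing this to $g^{(\alpha)}_s$ outside $[-M,M]$ gives a bona fide probability density $f$ with $H(X) = H(f) > H(g^{(\alpha)}_s) = H(Z^{(\alpha)}_s)$, because the two agree outside $[-M,M]$ and $f$ strictly dominates in the $[-M,M]$ contribution; one should check that all the relevant integrals ($\int f\log f$ over $[-M,M]$) are finite, which is immediate since $f$ is bounded and bounded below by a positive constant there.

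The third step is to verify the two claimed properties. Property (1) is the entropy comparison just described; the only subtlety is ensuring $H(Z^{(\alpha)}_s)$ is finite and well-defined so that the strict inequality is meaningful — this holds for symmetric stable laws with $\alpha<2$ since the density is bounded and has power-law tails integrable against $\log$. Property (2) follows from the Gnedenko--Kolmogorov tail criterion: $f$ and $g^{(\alpha)}_s$ have identical tails (they are literally equal for $|x|>M$), and $g^{(\alpha)}_s \in \mathrm{DNA}(Z^{(\alpha)}_s)$ trivially (a stable law is in its own domain of normal attraction, via Definition \ref{def:stabdens}), so $X \in \mathrm{DNA}(Z^{(\alpha)}_s)$ too.

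The main obstacle I anticipate is purely bookkeeping rather than conceptual: pinning down the precise form of the domain-of-normal-attraction criterion for $\alpha<2$ (including the $\alpha=1$ subtlety about centering constants $A_n$ and the exclusion of slowly varying factors that would push $X$ only into the larger \emph{domain of attraction}) and confirming that $g^{(\alpha)}_s$ meets it with the exact constants, so that a compactly-supported modification genuinely leaves membership unchanged. The entropy-increase half is robust and admits many alternative constructions (e.g.\ any mass-preserving, symmetry-preserving, entropy-increasing local rearrangement on a compact set), so if one particular gluing turns out awkward to analyze it can be swapped for another; the tail-matching half is where care is needed to make the citation to \cite{gnedenko} airtight.
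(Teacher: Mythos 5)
Your proposal is correct, but it follows a genuinely different route from the paper. The paper's proof takes $X = Z^{(\alpha)}_s + Z^{(\beta)}$ with $\beta \in (\alpha, 2]$ an independent stable of higher index: strict entropy increase comes from Lieb's Entropy Power Inequality (Theorem \ref{thm:epi}), together with the observation that $H(Z^{(\beta)}) > -\infty$, and membership in the domain of normal attraction is verified directly from the stability/scaling property, since $n^{-1/\alpha}\sum (U_i + Z_i) \sim n^{-(1/\alpha - 1/\beta)} Z^{(\beta)} + Z^{(\alpha)}_s$ and the first term vanishes because $1/\alpha - 1/\beta > 0$ --- no tail characterization of the domain is needed. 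You instead keep the tails of $g^{(\alpha)}_s$ untouched outside $[-M,M]$ and flatten the bulk, getting strict entropy gain from the Gibbs inequality (the uniform density maximizes entropy on an interval for fixed mass, and $g^{(\alpha)}_s$ is non-constant there), and you get the domain-of-normal-attraction property from the Gnedenko--Kolmogorov tail criterion plus the known tail asymptotics $\pr(|Z^{(\alpha)}_s|>x)\asymp x^{-\alpha}$ for $\alpha<2$. Both arguments are sound; the trade-off is that your construction leans on an external characterization (the exact G--K criterion and the stable tail constants, which is where your acknowledged bookkeeping lies, though your compactly supported, symmetric perturbation sidesteps the $\alpha\le 1$ centering issues), while the paper's is self-contained given the EPI and makes the failure of a maximum-entropy/DNA correspondence follow purely from stability and scaling; conversely, your argument is more elementary on the entropy side (no EPI) and isolates the conceptual point that domain membership for $\alpha<2$ constrains only the tails, leaving the bulk free to carry extra entropy.
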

\begin{proof} Consider $U_i$ IID $\sim Z^{(\beta)}$
and  (independently) take $Z_i$ IID $\sim Z^{(\alpha)}_s$,  where $2 \geq 
\beta > \alpha$, and take
$X_i = U_i + Z_i$. Then 
\begin{enumerate}
\item Since $Z_i$ and $U_i$ have integrable characteristic functions, their densities are uniformly bounded by  bounded constants $\zz^{(\beta)}$
and $\zz^{(\alpha)}$. Hence their densities certainly lie in $L_p(dx)$ for any
$p > 1$, and so we can apply Lieb's form of the Entropy Power
Inequality, Theorem \ref{thm:epi}, to deduce that
$H(X_i) = H(Z_i+U_i) >  H(Z_i) = H(Z^{(\alpha)}_s)$.
Here, the strict inequality follows since the density of $U_i$ is bounded by $p_U \leq \zz^{(\alpha)}$, or $-\log p_U(x) \geq - \log \zz^{(\alpha)}$ for all $x$.
Multiplying by $p_U(x)$ and
integrating,  we deduce that $H(U_i) \geq  - \log \zz^{(\alpha)} > -\infty$, so $2^{2H(U_i)} > 0$, which gives $H(Z_i+U_i) >  H(Z_i) $ using the EPI.

\item Further,  for IID copies $X_i \sim X$, the
 $(X_1 + \ldots + X_n)/n^{1/\alpha}$ converges weakly to $Z^{(\alpha)}_s$, since
\begin{eqnarray*} \frac{X_1 + \ldots + X_n}{n^{1/\alpha} }
& = & \frac{1}{n^{1/\alpha-1/\beta}} \frac{U_1 + \ldots + U_n}
{n^{1/\beta}  }
+ \frac{Z_1 + \ldots + Z_n}{n^{1/\alpha} } \\
& \sim & \frac{1}{n^{1/\alpha-1/\beta}} Z^{(\beta)} + Z^{(\alpha)}_s,
\end{eqnarray*}
where the first term tends to zero since $1/\alpha - 1/\beta > 0$.
\end{enumerate}
\end{proof}

Lemma \ref{lem:notdoa}
 also tells us that no equivalent of monotonicity of entropy holds in general; that is, since $H(X) > H(Z^{(\alpha)}_s)$
for some $X$ in the domain of normal attraction, it cannot be the case that entropy  is always increasing on convolution for random variables in this set.

The derivative of energy given in Theorem \ref{thm:pdequant}.\ref{eq:pdeener}  can allow us to deduce a 
maximum entropy result, in certain circumstances. The strategy is similar to that in \cite{johnson21}.
That is, we hope to prove that  the energy functional $\Lambda_s^{(\alpha)}(X_t) = -\int_{-\infty}^{\infty}  h_t(x) 
\log g_s^{(\alpha)}(x) dx$ is increasing in $t$. If that is the case, then we know that $\Lambda_s^{(\alpha)}(X_0) \leq \Lambda_s^{(\alpha)}(X_1)$, so
 since by construction $h_0 = f$ and $h_1 = g_1^{(\alpha)}$: 
\begin{eqnarray}
 H(f) & = & \int_{-\infty}^{\infty}-h_0(x) \log h_0(x) dx \nonumber \\ 
& \leq & \int_{-\infty}^{\infty}-h_0(x) \log g_s^{(\alpha)}(x) dx 
\label{eq:gibbs} \\
& \leq & \int_{-\infty}^{\infty}-h_1(x) \log g_s^{(\alpha)}(x) dx = H(g_1^{(\alpha)}), \label{eq:maxent}
\end{eqnarray}
The first inequality 
(\ref{eq:gibbs}) follows
by the Gibbs inequality.
This allows us to give a (not at all explicit) condition for a class
among which the Cauchy is maximum entropy. 
\begin{lemma} \label{lem:maxent}
If for all $t$ and $x$, random variable $X$ has a MMSE score
such that $\roo{X}{t}(x) + x/s$ has opposite signs to $x$, then 
it has entropy less than that of the Cauchy.
\end{lemma}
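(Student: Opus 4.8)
The plan is to apply the strategy outlined immediately before the statement: show that under the sign hypothesis on the standardized MMSE score, the energy functional $\Lambda_1^{(1)}(X_t)$ is nondecreasing in $t$, and then invoke the chain of inequalities (\ref{eq:gibbs})--(\ref{eq:maxent}) with $\alpha = 1$, $s = 1$ to conclude $H(X) \leq H(g_1^{(1)})$, the Cauchy entropy.

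First I would specialise Theorem \ref{thm:pdequant}.\ref{eq:pdeener} to the Cauchy case $\alpha = 1$, $s = 1$. This gives
\[
\frac{\partial}{\partial t} \Lambda_1^{(1)}(X_t) = \frac{1}{1-t} \int_{-\infty}^{\infty} h_t(x) \left( \roo{X}{t}(x) + x \right) \roof{g^{(1)}}(x)\, dx.
\]
Next I would compute $\roof{g^{(1)}}$ explicitly: since $g^{(1)}(x) = \zz/(1+x^2)$ with $\zz = 1/\pi$, we have $\log g^{(1)}(x) = \log \zz - \log(1+x^2)$, so $\roof{g^{(1)}}(x) = -2x/(1+x^2)$. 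Thus the integrand is
\[
h_t(x)\left(\roo{X}{t}(x) + x\right)\left(\frac{-2x}{1+x^2}\right).
\]
The factor $1/(1+x^2)$ is positive, $h_t(x) \geq 0$, and the prefactor $1/(1-t) > 0$ for $t \in [0,1)$. By hypothesis $\roo{X}{t}(x) + x$ has the opposite sign to $x$, i.e. $x \cdot (\roo{X}{t}(x)+x) \leq 0$ pointwise, so $-2x(\roo{X}{t}(x)+x) \geq 0$. Hence the entire integrand is nonnegative for every $x$, the integral is $\geq 0$, and therefore $\frac{\partial}{\partial t}\Lambda_1^{(1)}(X_t) \geq 0$ for all $t \in [0,1)$. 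Integrating in $t$ yields $\Lambda_1^{(1)}(X_0) \leq \Lambda_1^{(1)}(X_1)$. Finally, since $h_0 = f$ is the density of $X$ and $h_1 = g_1^{(1)}$ is the Cauchy density, the displayed chain (\ref{eq:gibbs})--(\ref{eq:maxent}) gives $H(X) = H(f) \leq \int -h_0 \log g_1^{(1)} \leq \int -h_1 \log g_1^{(1)} = H(g_1^{(1)})$, where the first step is the Gibbs inequality and the second is exactly $\Lambda_1^{(1)}(X_0) \leq \Lambda_1^{(1)}(X_1)$.

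The main obstacle is not the sign argument, which is essentially immediate once $\roof{g^{(1)}}$ is written down, but the regularity justification: one must ensure that the integrations by parts used in deriving Theorem \ref{thm:pdequant}.\ref{eq:pdeener} are valid (boundary terms at $\pm\infty$ vanish), that $\Lambda_1^{(1)}(X_t)$ is finite and differentiable in $t$, and that $H(f)$ is well-defined, so that the inequalities are not vacuous. These are the "assuming all functions are well-behaved at infinity" caveats already flagged in the proof of Theorem \ref{thm:pdequant}; I would either impose them as standing hypotheses or note that the sign condition on $\roo{X}{t}(x)+x$ together with boundedness of Cauchy-type densities suffices to control the relevant tails. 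A secondary point worth a sentence is monotonicity at the endpoint $t \to 1$: since the prefactor $1/(1-t)$ blows up, one should argue that $\Lambda_1^{(1)}(X_t) \to \Lambda_1^{(1)}(g_1^{(1)})$ as $t \to 1$ (which follows from weak convergence of $X_t$ to $Z_1^{(1)}$ plus uniform integrability of $-\log g_1^{(1)}$ against $h_t$), so that the limit in (\ref{eq:maxent}) is legitimate.
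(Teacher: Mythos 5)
Your proposal is correct and follows essentially the same route as the paper: specialise the energy derivative of Theorem \ref{thm:pdequant}.\ref{eq:pdeener} to the Cauchy case, insert $\roof{g^{(1)}}(x) = -2x/(1+x^2)$, use the sign hypothesis to conclude $\Lambda$ is nondecreasing, and finish with the Gibbs chain (\ref{eq:gibbs})--(\ref{eq:maxent}). Your explicit attention to the regularity and endpoint $t \to 1$ issues is a welcome addition, but the underlying argument is the paper's own.
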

\begin{proof}
Consider Equation (\ref{eq:todeal}),
which in the case of $\alpha =1$, becomes
\begin{equation} \label{eq:todocauch}
\frac{ \partial \Lambda_s^{(\alpha)}}{\partial t}(X_t) = - \frac{s}{(1-t) } \int_{-\infty}^{\infty}
 h_t(x) \left( \roo{X}{t}(x) +  \frac{x}{s} \right) \frac{2x}{s^2+x^2}  dx.\end{equation}
Hence, if $\roo{X}{t}(x) + x/s$ has the opposite sign to $x$ for all $x$ and $t$, then
the integrand in Equation (\ref{eq:todocauch}) is negative for each $x$, so overall, we deduce that
$\Lambda_s^{(\alpha)}(X_t)$ is increasing, and so the result follows by Equation (\ref{eq:maxent}).
\end{proof}
In fact this argument works in more generality:
Gawronski \cite{gawronski}
shows that stable laws are infinitely differentiable, with the $k$th
derivative having $k$ zeroes.  Hence,  $g_s^{(\alpha)}$ is unimodal and symmetric, with the 2nd 
term in Equation (\ref{eq:todeal}) having the opposite sign to $x$, as
required.
\section{Open problems} \label{sec:open}
We now briefly mention some open problems associated with the new MMSE score function of Definition \ref{def:score}.
Resolution of these would help significantly towards proving convergence in relative entropy to a stable law, in a
framework similar to that of \cite{barron}.
\begin{enumerate} \addtolength{\itemsep}{-5pt}
\item The analysis of the Fisher information by Brown in \cite{brown} is based on the fact that the Fisher score  $\rho^F$
satisfies a conditional expectation (projection) identity, a result which dates back to 
the work of Stam \cite{stam} and Blachman \cite{blachman}.
It would be of interest to prove a 
corresponding result for the MMSE score $\rho^M$ of Definition \ref{def:score}.
\item Such a projection identity could lead to a result corresponding to the subadditivity of Fisher information
on convolution (again see \cite{stam} and \cite{blachman}) -- allowing us to control the behaviour of the terms on the RHS of
(\ref{eq:relentder}).
\item Similarly, such a projection identity may allow us to control the sign of the standardized score $\roo{X}{t}$ in 
Lemma \ref{lem:maxent}, meaning that the maximum entropy property can be made more transparent for stable laws.
\item As mentioned previously, Barron \cite[Lemma 1]{barron} took the de Bruijn identity in differential form proved by 
Stam \cite[Equation (2.12)]{stam}, and extended it to give a representation of the relative entropy as an integral with respect to $t$, using 
an argument based on analytical properties of the relative entropy. It would be of interest to provide a similar 
representation of $D( f \| g^{(\alpha)}_s)$ as an integral, using (\ref{eq:relentder}).
\item It may be hoped that combining the subadditivity result and an integral form of the de Bruijn identity, then
convergence in relative entropy could be proved in the stable convergence regime of (\ref{eq:dna}).
\item Finally, it would be of interest to extend all this work to more general (non-symmetric) families of stable laws, 
removing restrictions on the parameterization made in Definition \ref{def:stabdens}.
\end{enumerate}

\end{document}